\definecolor{color1}{HTML}{105e8a}
\definecolor{color2}{HTML}{e99926}
\definecolor{color3}{HTML}{b82a0c}
\definecolor{color4}{HTML}{3e8a10}
\definecolor{color5}{HTML}{80037e}
\definecolor{color6}{HTML}{070707}
\definecolor{gradient0}{RGB}{190, 245, 250}
\definecolor{gradient1}{RGB}{164, 233, 248}
\definecolor{gradient2}{RGB}{134, 221, 248}
\definecolor{gradient3}{RGB}{104, 208, 249}
\definecolor{gradient4}{RGB}{72, 195, 252}
\definecolor{gradient5}{RGB}{36, 181, 255}
\newtheorem{theorem}{Theorem}
\newtheorem{lemma}{Lemma}
\newtheorem{example}{Example}
\newtheorem{corollary}{Corollary}
\theoremstyle{definition}
\newtheorem{definition}{Definition}
\newtheorem*{informaltheorem*}{Theorem}
\pgfplotsset{compat=1.17}
\newmdenv[
  topline=false,
  bottomline=false,
  rightline=false,
  skipabove=\topsep,
  skipbelow=\topsep,
  innertopmargin=0pt,
  innerbottommargin=0pt
]{siderules}
\newcommand{\cL}{\mathcal{L}}
\newcommand{\CC}{\mathbb{C}} 
\newcommand{\NN}{\mathbb{N}} 
\newcommand{\RR}{\mathbb{R}} 
\renewcommand{\epsilon}{\varepsilon}
\DeclareMathOperator*{\argmax}{\mathrm{arg\,max}}
\newcommand{\ind}{\mathbbm{1}}
\def\ind{\mathbbm{1}}
\newcommand\extrafootertext[1]{%
    \bgroup
    \renewcommand\thefootnote{\fnsymbol{footnote}}%
    \renewcommand\thempfootnote{\fnsymbol{mpfootnote}}%
    \footnotetext[0]{#1}%
    \egroup
}
\title{Learnable Mixed Nash Equilibria are Collectively Rational}
\author{Geelon So \and Yi-An Ma}
\date{\large University of California, San Diego}
\newtcolorbox{graybox}{
  colback=gray!5,   
  colframe=black!30, 
  boxrule=0.5pt,     
  arc=3pt,           
  left=6pt, right=6pt, top=6pt, bottom=6pt 
}
\providecommand{\strats}{\ensuremath{\mathbf{x}}}       
\providecommand{\NE}{\ensuremath{\mathbf{x}^*}}        
\providecommand{\SE}{\ensuremath{\mathbf{x}^\beta}}    
\providecommand{\domains}{\ensuremath{\mathbf{\Omega}}}
\providecommand{\utilities}{\ensuremath{\mathbf{f}}}
\providecommand{\BR}{\ensuremath{\boldsymbol{\Phi}}}
\providecommand{\betaBRs}{\ensuremath{\mathbf{\Phi}^\beta}}
\providecommand{\betaBR}{\ensuremath{\Phi^\beta}}
\providecommand{\regs}{\ensuremath{\mathbf{h}}}
\providecommand{\Jacobian}{\ensuremath{\mathbf{J}}}
\providecommand{\Hs}{\ensuremath{\mathbf{H}}}
\providecommand{\Lambdas}{\boldsymbol{\Lambda}}
\begin{document}

\maketitle

\begin{abstract}
    We extend the study of learning in games to dynamics that exhibit \emph{non-asymptotic stability}.  We do so through the notion of \emph{uniform stability}, which is concerned with equilibria of individually utility-seeking dynamics. Perhaps surprisingly, it turns out to be closely connected to economic properties of collective rationality. 
    Up to strategic equivalence, if a mixed equilibrium is uniformly stable, then it is \emph{weakly Pareto optimal}---there is no way for all players to improve by jointly deviating from the equilibrium---a form of collective rationality that rules out the types of behaviors in the \emph{prisoner's dilemma} or the \emph{tragedy of the commons}.
    Moreover, we show that uniform stability determines the last-iterate convergence behavior for the family of \emph{incremental smoothed best-response dynamics}, used to model individual and corporate behaviors in the markets. Unlike dynamics around strict equilibria, which can stabilize to socially-inefficient solutions, individually utility-seeking behaviors near mixed Nash equilibria lead to collective rationality.

   \vspace{3pt}\noindent{\footnotesize\textbf{Keywords:} algorithmic game theory, evolutionary dynamics, non-asymptotic stability, last-iterate convergence, smoothed best-response}
   \extrafootertext{Correspondence to: \texttt{geelon@ucsd.edu} and \texttt{yianma@ucsd.edu}}
\end{abstract}

\section{Introduction}

\begin{figure}
    \centering
    \includegraphics[width=0.85\linewidth]{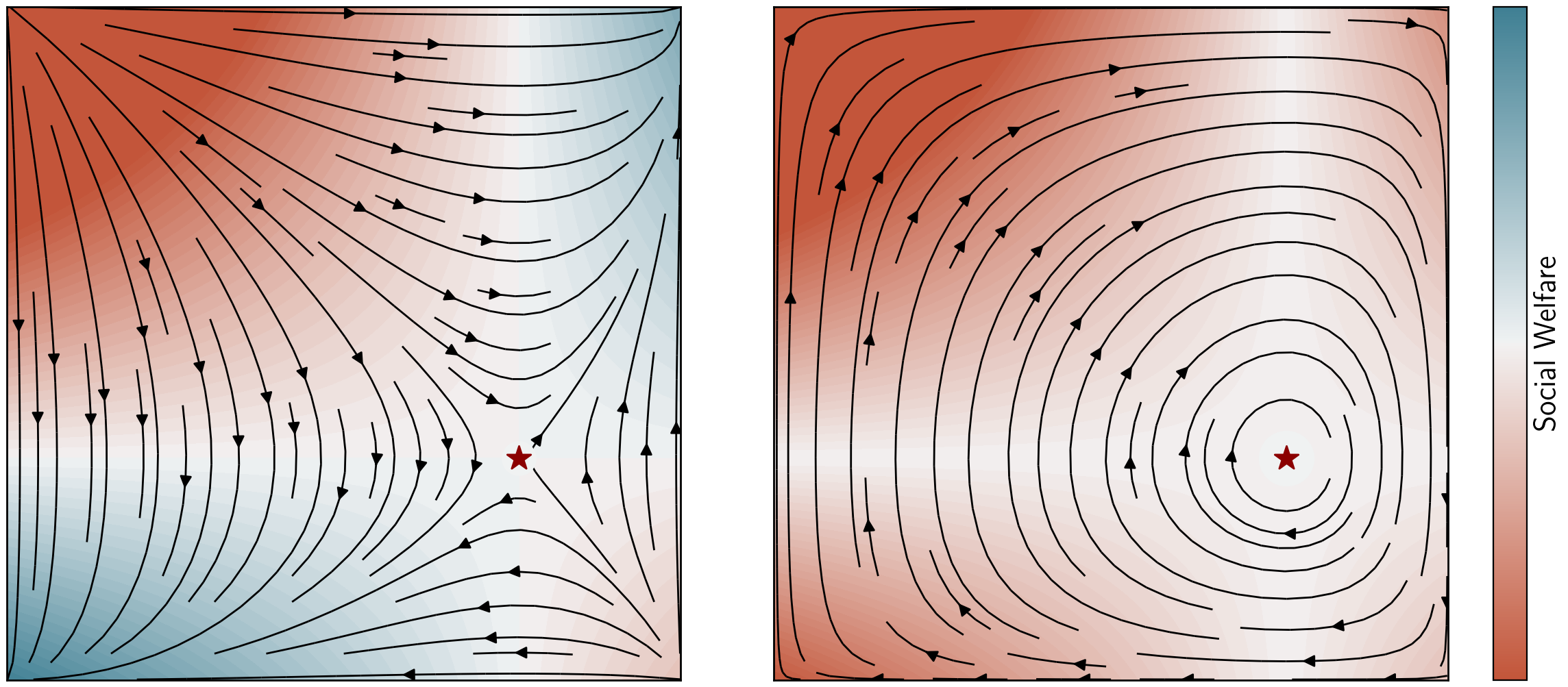}    
    \begin{minipage}{0.98\textwidth}
        \small \hspace{30pt}(a) Pareto-dominated equilibria are unstable
        \hspace{11pt}(b) Weakly-Pareto equilibria are neutrally stable
    \end{minipage}
    \caption{\small Uncoupled learning dynamics in two $2 \times 2$ normal-form games with purely-strategic utilities $\utilities = (f_1, f_2)$. The Nash equilibria are marked by stars. The streamlines visualize the trajectories of the learning dynamics. The heatmap plots the social welfare function $\min\{f_1, f_2\}$, measuring the utility of the player worst-off. The heatmap is white where the utilities are equal to the equilibrium; the darker the red, the worse the social welfare; the darker the blue, the better. (a) The mixed Nash equilibrium in this game is not weakly Pareto optimal, and the dynamics are unstable. (b) This equilibrium is weakly Pareto optimal, and the dynamics are non-asymptotically stable. The dynamics visualized here is continuous-time mirror ascent induced by the entropy mirror map.}
    \label{fig:two-behaviors}
\end{figure}

The \cite{nash1951non} equilibrium is a foundational solution concept in games, capturing when collective behavior or strategies may be stationary. These equilibria are meaningful to study, for once such strategies appear, they may persist for a long time. But, there is an important caveat: not all equilibria can be robustly reached by players in the game \citep{hart2003uncoupled,papadimitriou2007complexity,daskalakis2009complexity,daskalakis2010learning,milionis2023impossibility}. Any equilibrium that cannot be found or sustained is unlikely to have practical relevance. This caveat leads to the question on \emph{learnability}: which Nash equilibria can players eventually learn to play from repeated interactions?

To grasp individual and corporate behaviors, literature in classical economics considers a model of learning where players: (i) take an evolutionary approach and incrementally update their strategies; (ii) are utility-seeking, which means that they aim to improve their own payoffs; and (iii) are uncoupled, where players are unaware of the other players' utilities or methods to improve them~\citep{alchian1950uncertainty,winter1971satisficing}. The goal of the players in this model is not to compute any pre-determined Nash equilibrium, at least not explicitly. Rather, the equilibrium is to emerge out of their joint, but individually  utility-seeking, behavior.

Within this model of learning, the impossibility result of \cite{hart2003uncoupled} shows that no \emph{uncoupled} and \emph{asymptotically-stable} learning dynamics can converge to all Nash equilibria. Uncoupled means that players learn in uncoordinated and decentralized ways. Asymptotic stability means that the convergence is robust to small perturbations in the dynamics. Simply put, not all equilibria admit such a strong notion of learnability, viz. dynamical stability. The ones that do, however, have been characterized for certain classes of learning dynamics in standard normal-form games: an equilibrium is `asymptotically learnable' in this way if and only if it is  \emph{strict}, where every player has a single, deterministic strategy that is clearly locally optimal \citep{samuelson1992evolutionary,vlatakis2020no,giannou2021survival}. 

A significant gap remains for the learnability of \emph{mixed} Nash equilibria, where players may use randomized strategies.
On the one hand, mixed equilibria are not strict, and as a result, they are not asymptotically stable under these learning dynamics. Observations corroborating this finding demonstrate that many dynamics are not able to generically learn mixed Nash equilibria. This has been a significant source of criticism on the viability of mixed equilibria as solutions in games \citep{shapley1963some,crawford1985learning,stahl1988instability,jordan1993three,krishna1998convergence,ellison2000learning,kleinberg2011beyond,mertikopoulos2018cycles,bailey2021stochastic}. On the other hand, there are cases in which mixed equilibria \emph{are} approachable by simple, uncoupled dynamics, such as those in two-player zero-sum games \citep{robinson1951iterative,fudenberg1993learning,gjerstad1996rate,hofbauer1998evolutionary,benaim1999mixed,hofbauer2006best,daskalakis2018last,wei2021linear,piliouras2022fast}. 
To address this conundrum, we must go beyond the criterion of asymptotic stability, which is too stringent a requirement.

This work extends the study of learning in games to dynamics that exhibit \emph{non-asymptotic} stability. It is a weaker notion of stability that includes neutral stability, which only requires that players starting off in a neighborhood of such an equilibrium will continue to remain close to it. The question we ask is:
\begin{center}
\begin{minipage}{0.8\textwidth}
\centering
  \textit{Under the relaxed criterion of non-asymptotic stability, which Nash equilibria are learnable by uncoupled dynamics, and what are their economic properties?} 
\end{minipage}
\end{center}

There are two parts to this work. In the first part, we introduce and characterize a form of non-asymptotic stability that applies to broad classes of learning dynamics. We call it \emph{uniform stability}. While it arises from dynamical considerations, we show that uniform stability is closely tied to the economic properties of the equilibrium, namely, \emph{strategic Pareto optimality}. This is a game-variant of weak Pareto optimality defining a minimal form of collective rationality---it describes solutions where there is no way to strictly improve everyone's utilities, up to strategic equivalence.\footnote{As standard learning dynamics generally factor through strategic equivalence classes of games (they do not depend on the `externalities' or non-strategic components of the game), this reservation is necessary.} We show that an equilibrium that is not uniformly stable must  not be strategically Pareto optimal. This formalizes and tightens an observation made by prior work: non-convergence to mixed equilibria can sometimes actually be a blessing in disguise, where players would be worse off had they managed to converge to the equilibrium \citep{kleinberg2011beyond,ostrovski2013payoff,pangallo2022towards,anagnostides2022last}. This may be surprising as it runs counter to the more prominent phenomenon where non-cooperative, individual behavior leads to worse social outcomes, or a high \emph{price of anarchy}, such as in the Prisoner's Dilemma or the Tragedy of the Commons \citep{luce1957games,hardin1968tragedy,nachbar1990evolutionary,weibull1994if,koutsoupias1999worst}.

In the second part, we focus on convergence and non-convergence of \emph{incremental smoothed best-response dynamics}, a generalization of a canonical family of learning dynamics. For this class, we show that non-strict, uniformly-stable equilibria can be learned, though with slower convergence than strict equilibria. Around a locally uniformly-stable Nash equilibrium, these dynamics can be \emph{stabilized}: they lead to asymptotically-stable dynamics that approximate the equilibrium to arbitrary accuracy. While for non-uniformly stable mixed equilibria, there are dynamics that can never be stabilized to the equilibrium. 

Thus we observe a dichotomy between dynamics around strict and mixed Nash equilibria under the context of `as if' rationality~\citep{friedman1953methodology,weibull1994if}.
In the former, individually utility-seeking dynamics lead to the same equilibrium behaviors as if all the players have absolute, individual rationality.
As in the prisoner's dilemma, they can stabilize at collectively-irrational behaviors. 
Around mixed equilibria, uniform learnability disallows collective irrationality.
As a result, individually utility-seeking dynamics robustly behave as if the players were collectively rational, reminiscent of the `invisible hand' from classical economic theory~\citep{smith1776inquiry,debreu1959theory}.

\subsection{Main results}

We consider uncoupled learning dynamics for standard $N$-player normal-form games, where players, through repeated interactions, evolve their individual strategies in order to maximize their own utilities. 

For such dynamics, in \Cref{subsec:equilibrium-classes}, we introduce two non-asymptotic stability classes for mixed equilibria: those that have \emph{pointwise uniform stability} and those that exhibit a stronger \emph{local uniform stability}. These impose a second-order differential condition, either only at the equilibrium or in a neighborhood around it (note that the equilibrium condition itself is a first-order condition for mixed strategies). These conditions naturally arise from linear stability analysis, and they apply to broad classes of dynamics such as smoothed best-response, gradient ascent/adaptive dynamics, mirror ascent/positive-definite adaptive dynamics. 

\begin{restatable}{theorem}{localmeaning}
    \label{thm:meaning-local}
    Consider any general-sum $N$-player normal-form game with a mixed Nash equilibrium. Suppose that the equilibrium is locally uniformly stable. Then, it is locally strategically Pareto optimal.
\end{restatable}

In \Cref{sec:convergence}, we prove convergence results for the family of \emph{incremental, smoothed best-response learning dynamics}. The dynamics are controlled by a smoothing parameter $\beta$ and a learning rate $\eta$. The first governs the quality of any approximate Nash equilibria that the dynamics finds, while the second impacts the stability of its fixed points; whether the dynamics converge and at what rate.
We prove that if a Nash equilibrium is locally uniformly stable, then for any choice of approximation $\beta$, the dynamics can always be stabilized to the equilibrium by taking a sufficiently fine learning rate $\eta$, whereby it converges with asymptotic stability. With $T$ iterations, the overall convergence rate to the locally uniformly stable mixed Nash equilibria is of order $T^{-1/2}$.
However, if a Nash equilibrium is not pointwise uniformly stable, then there are dynamics that can never be stabilized beyond a certain approximation error, no matter what learning rate is chosen. 
\[\textrm{local uniform stability} \quad \implies\quad \textrm{convergence to mixed Nash equilibrium}
\quad \implies\quad \textrm{pointwise uniform stability}.
\]

Finally, as the first set of results are stated for fully-mixed equilibria, \Cref{sec:boundary} extends these results to the case where the Nash equilibrium may be partially mixed. The following summarizes our main results:

\begin{center}
\begin{minipage}{0.95\textwidth}
    \textbf{Local uniform stability implies strategic Pareto optimality.} If a Nash equilibrium falls within a region of uniformly-stable strategies, it must be collectively rational (\Cref{thm:meaning-local}). 
\end{minipage}
\end{center}

\begin{center}
\begin{minipage}{0.95\textwidth}
\textbf{Local uniform stability implies convergence to the mixed Nash equilibrium.} If a mixed Nash equilibrium is locally uniformly stable, then any strategy initialized close enough to the equilibrium will converge to the mixed Nash equilibrium at a $O(T^{-1/2})$ rate (\Cref{thm:main-convergence,thm:boundary-convergence}).
Conversely, pointwise uniform stability is necessary for convergence to the mixed Nash equilibrium (\Cref{prop:non-convergence-general}).
\end{minipage}
\end{center}

\subsection{Related work}
This work considers the convergence of learning dynamics in games. Two modes of convergence are often studied (i) time-averaged convergence, where players aim to achieve little or no regret in the long run, or the stronger (ii) day-to-day or last-iterate convergence, where players eventually learn to play an equilibrium of the game; see also \cite{fudenberg1993learning} for discussion. This work falls in the latter category.

The learnability of strict Nash equilibria is fairly well-understood \citep[and related works therein] {hart2003uncoupled,mertikopoulos2016learning,vlatakis2020no,giannou2021survival}. In contrast, the learnability of mixed Nash equilibria is more complicated. Many works have pointed out that asymptotically-stable convergence to mixed Nash equilibria is generally ruled out because linearized dynamics about them have zero trace \citep{crawford1985learning,jordan1993three,singh2000nash,hart2003uncoupled,mertikopoulos2018cycles}. As mixed Nash equilibria are needed for the existence of Nash equilibria \citep{nash1951non}, their generic infeasibility is troubling for the viability of the solution concept, but has a resolution via Harsanyi's purification theorem \citep{harsanyi1973games}. 

Philosophically, unlike \cite{harsanyi1973games}, this work does not try to save all mixed equilibria, but proceeds from the fact that uncoupled dynamics can sometimes converge to mixed Nash equilibria, and aims to further refine the solution concept \citep{van1992refinements}. Much of the existing work providing positive convergence results focuses on two-player games \citep{robinson1951iterative,fudenberg1993learning,gjerstad1996rate,hofbauer1998evolutionary,benaim1999mixed,hofbauer2006best,wei2021linear}, and especially for zero-sum games \citep[and related works therein]{gilpin2012first,daskalakis2018last,de2022optimistic,piliouras2022fast}. Our results apply to general $N$-player normal-form games.

The results and techniques in this paper also relate to many areas of game theory. The connection between uniform stability and weak Pareto optimality generalizes existing characterizations of games with weakly Pareto optimal equilibria in two-player games \citep{moulin1978strategically,adler2009note}, and bridges to the classical notion of strong Nash equilibria \citep{aumann2016acceptable}. Our notion of uniform stability builds on the game Jacobian (also sometimes called the game Hessian), which is also generally used to analyze smooth games \citep{isaacs1999differential,candogan2011flows,balduzzi2018mechanics,letcher2019differentiable}.

The algorithmic component of this paper focuses on smoothed best-response dynamics, which can be viewed as regularized learning/optimization \citep{rockafellar1997convex,mertikopoulos2016learning}. In normal-form games, the canonical regularizer is the logit map, yielding quantal response equilibria \citep{mckelvey1995quantal,alos2010logit,goeree2016quantal,mertikopoulos2016learning}. Our convergence result introduces the notion of stabilization to an equilibrium, which is similar to the tracing procedure discussed by \cite{mckelvey1995quantal}.

\section{Preliminaries}
An \emph{$N$-player game} $(\domains, \utilities)$ with players indexed by $n \in [N]$ consists of a \emph{joint strategy space} $\domains = \Omega_1 \times \dotsm \times \Omega_N$ and a set of \emph{utilities} $\utilities = (f_1,\ldots, f_N)$, where each player would like to maximize their own utility $f_n : \domains \to \RR$.
In a round of the game, everyone independently chooses a strategy $x_n \in \Omega_n$. These choices constitute the joint decision vector $\strats = (x_1,\ldots, x_N)$, and players receive their respective payoffs $f_n(\strats) \equiv f_n(x_n;\strats_{-n})$. Here, we let $\strats_{-n} = (x_1,\ldots, x_{n-1}, x_{n+1}, \ldots, x_N)$ denote the joint strategy without the $n$th component.

This work focuses on \emph{normal-form games}, which we define here and expand upon in \Cref{sec:normal-form-games-exposition}.

\begin{definition}[Normal-form game]
    An $N$-player \emph{normal-form game} $(\domains, \utilities)$ is one where each player $n \in [N]$ has a set of $k_n$ alternatives/pure strategies, the strategy space $\Omega_n$ is the $(k_n-1)$-dimensional probability simplex $\Delta^{k_n - 1}$, and each utility $f_n$ is defined by a payoff tensor $T^n \in \RR^{k_1 \times \dotsm \times k_N}$ as follows:
    \[f_n(\strats) = \sum_{i_1 \in [k_1]}\dotsm\sum_{i_N \in [k_N]} T^n_{i_1,\ldots, i_N} x_{1,i_1} \dotsm x_{N,i_N}\quad \textrm{and}\quad x_n = (x_{n,1},\ldots, x_{n,k_n}) \in \Delta^{k_n - 1}.\]
    We say that the strategy $\strats \in \domains$ is \emph{mixed} if it is in the interior of $\domains$, which we denote by $\mathrm{int}(\domains)$.
\end{definition}

Normal-form games are a special case of \emph{multilinear games}, which are games whose utilities are polynomials where the maximum degree in each variable is one. In the following, we first define multilinear polynomials of the form $p : \RR^{d_1} \times \dotsm \times \RR^{d_N} \to \RR$, where we think of $d_n = k_n -1$. The connection between normal-form games and multilinear games is formalized by \Cref{lem:normal-to-multilinear}.

\begin{definition}[Multilinear polynomial] \label{def:multilinear-polynomial}
    Let $\RR[\strats]$ contain the polynomials of the form $p : \RR^{d_1} \times \dotsm \times \RR^{d_N} \to \RR$, with variables $\strats = (x_1,\ldots, x_N)$ and $x_n = (x_{n,1},\ldots, x_{n,d_n})$ for each $n \in [N]$. We say that $p \in \RR[\strats]$ is \emph{multilinear} over $\strats$ if it is affine over each $x_n$. That is, for each $n \in [N]$, the polynomial is affine in $x_n$,
    \[p(\strats) = A_{n}(\strats_{-n})^\top x_n + b_{n}(\strats_{-n}),\]
    where $A_n \in \RR[\strats_{-n}]^{d_n}$ and $b_n \in \RR[\strats_{-n}]$ are polynomials, and $\strats_{-n}$ includes all variables besides those in $x_n$.
\end{definition}

\begin{definition}[Multilinear game]
    An $N$-player \emph{multilinear game} $(\domains,\utilities)$ is a game where for each $n \in [N]$, the strategy space $\Omega_n$ is an open, convex subset of $\RR^{d_n}$, and the utility $f_n : \domains \to \RR$ is a multilinear polynomial. That is, each utility $f_n$ extends to a multilinear polynomial on all of $\RR^{d_1} \times \dotsm \times \RR^{d_N}$.
\end{definition}

\subsection{Individual and collective rationality: two solution concepts}
A fundamental aspect of non-cooperative games is that, even though players cannot directly control how others behave, the payoff that each receives nevertheless depends on their joint decision. This leads to the Nash equilibrium as a solution concept for games---joint strategies where it is not possible to improve one's own utility without coordinating with other players. It is a notion of \emph{individual rationality}:

\begin{definition}[Nash equilibrium]
    A joint decision vector $\NE \in \domains$ is a \emph{Nash equilibrium} if:
    \[\forall n \in [N],\qquad x_n^* \in \argmax_{x_n \in \Omega_n}\, f_n(x_n; \NE_{-n}).\hspace{25pt}\]
\end{definition}

In contrast, if as in the multi-objective optimization, players are cooperative and can coordinate, then a minimal solution concept is weak Pareto optimality, which describes joint decisions where there is no way to strictly improve everyone's utilities. It is a notion of \emph{collective rationality}:

\begin{definition}[Weak Pareto optimality]
    A joint decision vector $\NE \in \domains$ is \emph{weakly Pareto optimal} if there does not exist any $\strats \in \domains$ such that $f_n(\strats) > f_n(\NE)$ for all $n \in [N]$.
\end{definition}

These solution concepts are generally incomparable and independent of each other. Famous examples, including the \emph{prisoner's dilemma} or the \emph{tragedy of the commons}, demonstrate that Nash equilibria and Pareto optimal strategies can even be disjoint---in the absence of mechanisms to enable or enforce cooperation, socially-optimal outcomes can be unstable under individually-rational or selfish behavior.

This work refines the relationship between Nash stability and Pareto optimality from the perspective of \emph{learning in games}, which is a dynamical perspective in game theory motivated by a simple observation: a solution concept has no bearing on practice if players of a game cannot find it. Eventually, we will show that the mixed Nash equilibria that can be robustly found must, in fact, be strategically Pareto optimal, a notion related to Pareto optimality that is well-suited for games. We will motivate and introduce this in the next section. To do so, we now turn to how players come to discover and play a Nash equilibrium.

\subsection{Uncoupled learning dynamics and strategic equivalence}

Consider a model of learning where players learn to play the game via trial-and-error: they simply play many rounds of the game. As players may observe the joint strategies $\strats(t)$ chosen over time $t$, they can also continually adjust their own strategies. A \emph{learning rule/dynamics} specifies how a player $n \in [N]$ chooses to act each round $x_n(t)$ as a function of past experiences, namely $\strats(s)$ where $s < t$, as well as any additional information they may have about the game---for example, we assume players know their own utilities.

A basic informational constraint we impose is that the players are \emph{uncoupled}, which means that a player's learning rule cannot depend on the utilities or learning rules of other players \citep{hart2003uncoupled}. Intuitively, this means that players do not know what others want, nor how others learn from past experiences. Even if they were so inclined, players do not know how to directly be altruistic or malicious. This constraint in a sense enforces selfish behavior, or individual rationality. 

To discuss the connection between the economic and the dynamic properties of games, we need to focus on the `active ingredients' in the utilities that play a role in the dynamics. In existing literature, this motivates the decomposition of utilities into their \emph{strategic} and \emph{non-strategic} components \citep{candogan2011flows}. In multilinear games, the strategic component precisely consists of the terms in the polynomial $f_n$ that depends on the $n$th set of variables $x_n$, while the non-strategic terms are those that vary independently of $x_n$. Games with the same strategic components are called \emph{strategically equivalent} \citep{monderer1996potential}.

\begin{definition}[Strategic and non-strategic components]
    \label{def:strategic-component}
    Two utilities $f_n, f_n'$ for agent $n$ are \emph{strategically equivalent} if their difference $f_n(\strats) - f_n'(\strats)$ is constant in $x_n$. Given a reference point $\NE$, the \emph{strategic component} of $f_n$ at $\NE$ is the map $\tilde{f}_n$, where
    \[\tilde{f}_n(\strats) = f_n(\strats) - f_n(x_n^*; \strats_{-n}).\]
    In particular, two utilities are strategically equivalent if and only if they have the same strategic components.
\end{definition}

The target of this work is to illuminate the game-theoretic properties of equilibria found or not found by simple, uncoupled learning dynamics. As the dynamics we study, \Cref{eqn:dynamics} below, indeed turn out to be invariant to changes in the non-strategic component of the game (\Cref{lem:strategic-equivalence}), we expect results only up to strategic equivalence. This motivates the following notion of strategic Pareto optimality.

\begin{definition}[Strategic Pareto optimality]
    \label{def:strategic-pareto-optimality}
    Let $\NE \in \domains$. It is  \emph{strategically Pareto optimal} if it is weakly Pareto optimal for the strategic components $(\tilde{f}_1,\ldots, \tilde{f}_N)$ at $\NE$.
\end{definition}

Akin to the classic story about Nash equilibria and weak Pareto optimality, a simple example shows that neither the condition of being a Nash equilibrium nor being strategically Pareto optimal imply each other (\Cref{ex:strategic-not-nash}). 
But surprisingly, despite uncoupledness, players can only  converge to a mixed Nash equilibrium that are strategically Pareto optimal. 
In the next section, we formally introduce the learning dynamics considered in this work.

\subsection{A simple class of learning rules}

Let $(\domains, \utilities)$ be an $N$-player game where each $\Omega_n$ is a convex domain. Let $\regs = (h_1,\ldots, h_N)$ be a collection of strictly convex regularizers $h_n : \Omega_n \to \RR$. The dynamics we study are based on smoothed best-responses, where players maximize their own utilities by solving a regularized optimization problem:

\begin{definition}[$\beta$-smoothed best-response] \label{def:smoothed-br}
    The \emph{$\beta$-smoothed best-response map} $\betaBRs = (\betaBR_1,\ldots, \betaBR_N) : \domains \to \domains$ with respect to a set of strictly convex regularizers $\regs$ for any $\beta > 0$ is given by:
    \begin{equation*} 
        \forall n \in [N],\qquad \betaBR_n(\strats) = \argmax_{x_n' \in \Omega_n}\, f_n^{\vphantom{\prime}}(x_n'; \strats_{-n}^{\vphantom{\prime}}) - \beta h_n^{\vphantom{\prime}}(x_n'). \phantom{\forall n \in [N]}
    \end{equation*}
\end{definition}

Throughout this work, we assume that the regularizers are \emph{steep}, which is a standard assumption from convex optimization that guarantees that $\betaBRs$ map into the interior of $\domains$ \citep[Theorem 26.1]{rockafellar1997convex}.
\begin{definition}[Steep regularizer]
    \label{def:steep-reg}
    Let $\Omega$ be a compact and convex domain, and $h: \Omega \to \RR$ be strictly convex and smooth on $\mathrm{int}(\Omega)$. It is steep if for any sequence $x_k \in \Omega$ converging to a boundary point of $\Omega$,
    \[\lim_{k \to \infty} \, \big\|\nabla h(x_k)\big\| = \infty.\]
\end{definition}
In the case of normal-form games, a canonical choice is entropy regularization, $h_n(x_n) = \sum_i x_{n,i} \log x_{n,i}$, which is a steep regularizer on the probability simplex. The smoothed best-response map is also called \emph{quantal response}, and the smoothed-best responses are given by the softmax function:
\[\Phi_n^\beta(\strats) \propto \exp\Big(\frac{1}{\beta} \nabla_n f_n(\strats)\Big).\]
The specific class of discrete-time dynamics that we study in this work is the following:

\begin{graybox}
\paragraph{Incremental smoothed best-response dynamics.}
Let $\eta \in (0,1)$ be a learning rate and $\betaBRs$ be a $\beta$-smoothed best-response map induced by steep regularizers. The \emph{$(\betaBRs, \eta)$-averaging dynamics} is the dynamics $\strats(t)$ given by the transition map:
\begin{equation} \label{eqn:dynamics}
    \strats(t) = (1 - \eta) \, \strats(t-1) + \eta \, \betaBRs\big(\strats(t-1)\big),
\end{equation}
where the joint strategy $\strats(t)$ played in the round $t$ is an weighted average of the strategy $\strats(t-1)$ played in the previous round $t-1$ with its smoothed best response $\betaBRs(\strats(t-1))$. 
\end{graybox}

The fixed points of \eqref{eqn:dynamics} are precisely the fixed points of the smoothed best-response map $\betaBRs$, which are called \emph{smoothed equilibria}. Importantly, smoothed equilibria are \emph{approximate Nash equilibria} (\Cref{lem:smooth-to-Nash}), and they converge to Nash equilibria as the smoothness parameter $\beta$ goes to zero (\Cref{lem:convergence-to-Nash}). When the dynamics converge, they achieve meaningful solutions in a game-theoretic sense.

\begin{definition}[$\beta$-smoothed equilibrium]
    Let $\betaBRs$ be a $\beta$-smoothed best-response map. A joint decision vector $\SE \in \Omega$ is a \emph{$\beta$-smoothed equilibrium} if it is a fixed point $\SE = \betaBRs(\SE)$.
\end{definition}

\begin{definition}[$\epsilon$-approximate Nash equilibrium]
    Let $\epsilon \geq 0$. We say that a joint decision vector $\strats^\epsilon \in \domains$ is an \emph{$\epsilon$-approximate Nash equilibrium} if $f_n(x_n, \strats_{-n}^\epsilon) \leq f_n(\strats^\epsilon) + \epsilon$ for all $\strats \in \domains$ and $n \in [N]$.
\end{definition}

\paragraph{Economic interpretation of game dynamics} Smoothed best-response is a common model of \emph{bounded rationality}, where the solution that a player finds is only an approximate best-reply to the strategies $\strats_{-n}$ chosen by the others, perhaps due to noise, or constraints on information or computation. The player approaches `perfect rationality' as the parameter $\beta$ goes to zero.

Typically, smoothed best-response and its limiting best-response are rational from the perspective of players that are \emph{myopic}, which means that they do not try to steer the long-term behavior of the dynamics. Myopia is often justified in the evolutionary game theory setting, where each `player' corresponds to a large population of individuals. As each individual has an infinitesimal impact on the long-term behavior of the dynamics, the most individually-rational behavior is to optimize for immediate outcomes. Myopia is also justified if the individual's lifespan is much shorter than the timescale over which the dynamics evolve.

Under this context, the incremental or averaging aspect of the dynamics \eqref{eqn:dynamics} also arise naturally; \cite{fudenberg1998theory} also call these ``partial'' dynamics. It capture situations where only a random fraction of individual within the populations play the game at each round. As only a fraction of individuals update their strategies, the strategy profile only partially evolves toward the smoothed best-response direction. In short, the dynamics \eqref{eqn:dynamics} can be considered as a model for strategic interactions across large, imperfectly-rational populations, such as at a traffic stop where individuals with bounded rationality are intermittently matched up with others to play a dangerous game of chicken.

\section{Uniform stability of Nash equilibria} \label{subsec:equilibrium-classes}
The local behavior of a smooth, non-degenerate dynamical system is determined by linearizing the dynamics. In learning in games literature, this motivates the \emph{game Jacobian} (e.g.\ \cite{letcher2019differentiable}). It arises naturally from the analysis of gradient ascent dynamics, where players update their strategies along the direction of steepest ascent $\nabla_n f_n(\strats)$, and the game Jacobian is the Jacobian of the gradient ascent dynamics:

\begin{definition}[Jacobian of the game] \label{def:game-jacobian}
    Let $(\domains, \utilities)$ be an $N$-player multilinear game and let $\strats \in \domains$. The \emph{game Jacobian} $\Jacobian(\strats)$ is the Jacobian of the map $\strats \mapsto (\nabla_1 f_1(\strats), \ldots, \nabla_N f_N(\strats))$, so that:
    \[\forall n,m \in [N],\qquad J_{nm}(\strats) = \nabla^2_{nm} f_n(\strats).\hspace{2em}\]
    In particular, the block diagonals $J_{nn}(\strats) = 0$ are zero matrices.
\end{definition}

We use the game Jacobian to define a notion of stability of a Nash equilibrium $\NE$. It depends on the eigenvalues of the matrices $\Hs^{-1}\Jacobian(\NE)$, where $\Hs$ ranges over the family of positive-definite, block-diagonal matrices. If such an eigenvalue has positive real part, the dynamics around $\NE$ become unstable for some regularizers (\Cref{prop:non-convergence-general}). Due to that fact that the block diagonals of the game Jacobian are zero matrices, sum of all eigenvalues is zero: $\mathrm{tr}(\Hs^{-1}\Jacobian(\strats)) = 0$. As a result, instability also arises if there is an eigenvalue with negative real part---some other eigenvalue must have positive real part. And so, the only possibility for stability is when all eigenvalues of matrices $\Hs^{-1} \Jacobian(\NE)$ are purely imaginary. We call this uniform stability. Formally, this definition uses the following, purely linear-algebraic matrix condition (\Cref{def:uniform-stability-matrix}). Notice that uniform stability is a bilinear (that is, a second-order) condition depending only on the strategic component of the game.

\begin{definition}[Uniformly-stable block matrix]
    \label{def:uniform-stability-matrix}
    Let $\Jacobian \in \RR^{d \times d}$ be a block-matrix, where $d$ is the sum of its block dimensions $(d_1,\ldots, d_N)$. We say that $\Jacobian$ is \emph{uniformly stable} if the eigenvalues of $\Hs^{-1} \Jacobian$ are purely imaginary for all positive-definite, block-diagonal $\Hs \in \RR^{d \times d}$,
    \[\mathrm{spec}\big(\Hs^{-1}\Jacobian\big) \, \subseteq \, i\RR,\]
    where $\Hs = \mathrm{diag}(H_1,\ldots, H_N)$ and $H_n \in \RR^{d_n \times d_n}$ is positive definite.
\end{definition}

\begin{definition}[Uniformly stable equilibria] \label{def:uniform-stability}
    Let $(\domains, \utilities)$ be an $N$-player multilinear game. We say that a Nash equilibrium $\NE \in \domains$ is (pointwise) \emph{uniformly stable} if the Jacobian $\Jacobian(\NE)$ is uniformly stable. It is \emph{locally uniformly stable} if is contained in an open set $U \subset \domains$ such that $\Jacobian(\strats)$ is uniformly stable for all $\strats \in U$.
\end{definition}

\Cref{sec:economic-meaning} develops the economic meaning of uniform stability. \Cref{sec:convergence} shows the consequences of this spectral condition for dynamic stability. These two sections are independent of each other and can be read in either order. Before doing so, we provide some intuitive and formal motivation for uniform stability.

\subsection{Motivation for uniform stability}

The concept of uniform stability arises when we relax the assumption that players are `perfectly rational'. Unlike gradient-ascent players, smoothed best-response players do not necessarily update their strategies in the direction of steepest ascent (\Cref{lem:gradient-sbr}). With the existence of the regularizers, players are no longer maximally efficient---due to either bounded rationality or a need for better stability in their strategies, players only aim to improve modestly. 

At a current joint strategy $\strats$, let's say that the $n$th player makes an update along the $v_n$ direction. This leads to a local improvement whenever $\nabla_n f_n(\strats)^\top v_n > 0$. And this occurs if and only if there is a positive-definite matrix $H_n \succ 0$ so that the update direction $v_n$ and the gradient $\nabla_n f_n(\strats)$ are related by:
\[v_n = H_n^{-1} \nabla_n f_n(\strats).\]
This is a consequence of a simple linear-algebraic result, proved in \Cref{sec:pd-stretch}:

\begin{restatable}{lemma}{pdstretch} \label{lem:pd-stretch}
    Let $u, v \in \RR^m \setminus \{0\}$. Then, $u^\top v > 0$ if and only if there is a positive-definite matrix $H$ so that:
    \[u = Hv.\]
\end{restatable}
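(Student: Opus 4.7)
The plan is to prove the two implications separately. The forward direction is immediate: if $u = Hv$ with $H$ symmetric and positive definite, then $u^\top v = (Hv)^\top v = v^\top H v > 0$ because $v \neq 0$.

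For the converse, I will construct an explicit symmetric positive-definite matrix $H$ satisfying $Hv = u$. My proposed construction is a low-rank perturbation of the identity:
\[
H \;=\; \alpha\Bigl(I - \tfrac{vv^\top}{\|v\|^2}\Bigr) \;+\; \tfrac{uu^\top}{u^\top v},
\]
for any choice of $\alpha > 0$ (e.g.\ $\alpha = 1$). Here the first summand is $\alpha$ times the orthogonal projector onto $v^\perp$, and the second is well-defined because of the hypothesis $u^\top v > 0$.

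The verification splits into three short steps. First, $H$ is symmetric by inspection. Second, $Hv = u$ follows from a direct computation: the first term annihilates $v$ (by construction of the projector), while $\tfrac{uu^\top v}{u^\top v} = u$. Third, positive-definiteness: $H$ is a sum of two positive-semidefinite matrices, so it suffices to check that their kernels intersect trivially. The kernel of the projector-summand is $\mathrm{span}(v)$ and the kernel of $\tfrac{uu^\top}{u^\top v}$ is $u^\perp$, and $\mathrm{span}(v)\cap u^\perp = \{0\}$ precisely because $u^\top v \neq 0$. Hence $H \succ 0$.

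The only mild subtlety, which my construction sidesteps, is that naive attempts (e.g.\ $\lambda I$ plus a symmetric rank-two correction of the form $wv^\top + vw^\top$ carrying $v$ to $u$) require $\lambda$ to exceed a threshold depending on the component of $u$ orthogonal to $v$. The construction above dodges this by using the rank-one term $\tfrac{uu^\top}{u^\top v}$ to handle the $\mathrm{span}(v)$ direction exactly, so that any $\alpha > 0$ suffices for the remaining directions.
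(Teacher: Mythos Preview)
Your proof is correct. Both directions are handled properly: the easy direction via $v^\top H v > 0$, and the constructive direction via your explicit formula $H = \alpha(I - vv^\top/\|v\|^2) + uu^\top/(u^\top v)$. The verification that $Hv = u$ is clean, and the positive-definiteness argument via trivially intersecting kernels of two PSD summands is sound (using that $x^\top A x = 0$ for PSD $A$ forces $x \in \ker A$).

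Your construction differs from the paper's. The paper splits into two cases: if $u$ and $v$ are parallel it takes $H = (\|u\|/\|v\|)I$; otherwise it finds an orthonormal basis $B$ in which both $B^\top u$ and $B^\top v$ lie in the open positive quadrant of the first two coordinates (possible because the angle between $u$ and $v$ is acute), and sets $H = B\,\mathrm{diag}(\alpha_1/\beta_1,\alpha_2/\beta_2,1,\ldots,1)\,B^\top$. Your approach is more uniform---a single closed-form formula with no case split and no basis choice---and the positive-definiteness check is slicker. The paper's construction, on the other hand, yields an $H$ that is diagonal in an explicit basis with visible eigenvalues, which can occasionally be convenient downstream. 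Either argument is perfectly adequate here.
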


In optimization, this result justifies methods such as preconditioned gradient ascent and mirror ascent. In learning in games, it also directly justifies smoothed best-response dynamics as well as \emph{positive-definite adaptive dynamics} \citep{hopkins1999note}. In the current paper, this result leads to the notion of uniform stability by pointing us toward the preconditioned Jacobian maps $\strats \mapsto \big(H_1^{-1} \nabla_1f_1(\strats), \ldots, H_N^{-1}\nabla_N f_N(\strats)\big)$. As uncoupled players do not exactly know how others are learning, stability is defined uniformly over all positive-definite matrices. 

A formal connection between uniform stability and smoothed best-response dynamics is obtained by computing the Jacobian of the smoothed best-response map $\betaBRs$. From \Cref{lem:gradient-sbr} below, it becomes clear that the eigenvalues of $\nabla \betaBRs(\NE)$ are related to the spectra of matrices $\Hs^{-1} \Jacobian(\NE)$. 
This result allows us to achieve in \Cref{sec:convergence} the convergence guarantees assuming local uniform stability, as well as non-convergence without pointwise uniform stability.
The proof of \Cref{lem:gradient-sbr} is given in \Cref{sec:gradient-sbf}.

\begin{restatable}[Gradient of smoothed best-response]{lemma}{gradientsbr}\label{lem:gradient-sbr}
    Let $(\domains, \utilities)$ be a multilinear game, where $\Omega_n \subset \RR^{d_n}$ for each $n \in [N]$, and let $\regs$ be a set of smooth and strictly convex regularizers, $h_n : \Omega_n \to \RR$. Then, the associated $\beta$-smoothed best-response map $\betaBRs$ is well-defined and smooth, where the gradient of $\betaBRs$ at $\strats$ is given by:
    \[\hspace{3em} \nabla \betaBRs(\strats) = \frac{1}{\beta} \Hs(\strats)^{-1} \Jacobian(\strats), \qquad \textrm{ where } \Hs(\strats)_{nm} = \begin{cases} \nabla_n^2 h_n(x_n) & n = m\\
    \hspace{7pt} 0_{d_n \times d_m} & n \ne m,\end{cases}\]
    and $\Hs(\strats) \in \RR^{d \times d}$ is a positive-definite, block-diagonal matrix with $d = d_1 + \dotsm + d_N$. Moreover, the gradient $\nabla \betaBRs$ is smooth in the interior of $\domains$.
\end{restatable}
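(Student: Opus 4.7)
The plan is a direct application of the implicit function theorem to the first-order optimality condition of the smoothed best-response, exploiting multilinearity of $\utilities$ to simplify the computation. In essence, the gradient formula falls out by implicitly differentiating the stationarity condition characterizing $\Phi_n^\beta(\strats)$.

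First, I would establish well-definedness. For each fixed $\strats_{-n}$, the objective $x_n' \mapsto f_n(x_n'; \strats_{-n}) - \beta h_n(x_n')$ is strictly concave: multilinearity of $f_n$ makes the utility term affine in $x_n'$, while $\beta h_n$ is strictly convex. Steepness of $h_n$ (together with the Rockafellar reference cited immediately before the lemma) guarantees a unique maximizer $\Phi_n^\beta(\strats) \in \mathrm{int}(\Omega_n)$, which therefore satisfies the interior stationarity condition
\[\nabla_n f_n(\strats) \;=\; \beta \, \nabla h_n\bigl(\Phi_n^\beta(\strats)\bigr).\]
Since multilinearity also forces $\nabla_n f_n(\strats) = A_n(\strats_{-n})$ to be independent of $x_n$, the map $\Phi_n^\beta$ depends only on $\strats_{-n}$.

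Next, I would obtain smoothness of $\Phi_n^\beta$ from the implicit function theorem applied to $F_n(y, \strats) := \nabla_n f_n(\strats) - \beta \nabla h_n(y)$: the partial $\partial_y F_n = -\beta \nabla^2 h_n(y)$ is invertible at $y = \Phi_n^\beta(\strats)$ by strict convexity of $h_n$. Differentiating the stationarity condition in $\strats$ and applying the chain rule then gives
\[\Jacobian_n(\strats) \;=\; \beta \, \nabla^2 h_n\bigl(\Phi_n^\beta(\strats)\bigr) \, \nabla \Phi_n^\beta(\strats),\]
where $\Jacobian_n$ denotes the $n$th row-block of $\Jacobian$. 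Inverting the positive-definite Hessian and stacking over $n \in [N]$ yields the block form
\[\nabla \betaBRs(\strats) \;=\; \frac{1}{\beta} \, \Hs(\strats)^{-1} \Jacobian(\strats),\]
with $\Hs(\strats)$ block-diagonal whose $n$th block is the Hessian of $h_n$ (evaluated at the interior smoothed best-response, matching the statement at smoothed equilibria or under the convention that $x_n$ refers to the argument of $\betaBR_n$'s optimality condition). Block diagonality of $\Hs(\strats)$ reflects that only the $n$th regularizer enters the derivative of $\Phi_n^\beta$, and the vanishing of $J_{nn}$ is consistent with the independence of $\Phi_n^\beta$ from $x_n$.

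There is no substantive obstacle in this proof: the only care needed is to justify that the optimum is interior so that the first-order condition and the chain rule apply, which is precisely the role of the steepness assumption built into the regularizers. Everything else is a one-line computation once the stationarity condition is in hand.
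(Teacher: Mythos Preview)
Your approach is correct and essentially the same as the paper's: both argue well-definedness from strict concavity and then apply the implicit function theorem to the first-order stationarity condition. Your version is in fact tidier---the paper's proof introduces a Lagrangian with a simplex constraint that is extraneous to the open-domain multilinear setting of the lemma---and you correctly flag that the Hessian block naturally arises evaluated at $\Phi_n^\beta(\strats)$ rather than at $x_n$, a subtlety the paper leaves implicit.
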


Conversely, \Cref{lem:choice-regularizer} shows that for any choice $\strats$ and positive-definite, block diagonal matrix $\Hs^{-1}$, there exists a collection of steep regularizers such that $\nabla \betaBRs(\strats)$ is precisely given by $\Hs^{-1}\Jacobian(\strats)$, up to scaling.

\section{The economic meaning of uniform stability}
\label{sec:economic-meaning}
The previous section introduced uniform stability as a spectral notion---it is a property in terms of the eigenvalues of a family of matrices related to the game Jacobian. The main result of this paper, \Cref{thm:meaning-local}, shows that this spectral property is closely tied with the economic property of the equilibrium: if a mixed Nash equilibrium is locally uniformly stable, then it is locally strategically Pareto optimal.

To describe the proof strategy, let's consider a simple $N$-player multilinear game, where each player has one degree of freedom. In particular, let $f_n : \mathbb{R}^N \to \mathbb{R}$ be the utility for player $n \in [N]$, and let the origin $\mathbf{0} \in \mathbb{R}^N$ be a Nash equilibrium. Let's assume that $f_n$ is purely strategic. In this case, it has the form
\[\hspace{5em}f_n(\mathbf{x}) = x_n g_n(x),\qquad g_n(\mathbf{x}) = \partial_n f_n(\mathbf{x}).\]
At the Nash equilibrium, we have that $f_n(\mathbf{0}) = 0$ and  $g_n(\mathbf{0}) = 0$.

The statement of \Cref{thm:meaning-local} for this game would say that if the origin is locally uniformly stable, then it is locally weakly Pareto optimal. That is, there is an open set $U \subset \mathbb{R}^N$ containing the origin such that for all alternative strategies $\strats \in U$, there is a player $k \equiv k(\mathbf{x})$ whose utility does not improve
\[f_k(\strats) = x_k g_k(\mathbf{x}) \leq 0.\]
We show this by proving that the function $-G$ where $G = (g_1,\ldots, g_N)$ is a \emph{$\mathrm{P}_0$-function}. This can be considered as a type of generalization of monotonicity:

\begin{definition}[$\mathrm{P}_0$-function, \cite{more1973pfunctions}]
    A function $Q \equiv (q_1,\ldots, q_N): D \subset \mathbb{R}^N \to \mathbb{R}^N$ is a \emph{$\mathrm{P}_0$-function} if for all distinct $\mathbf{x},\mathbf{y} \in D$, there is an index $k = k(\mathbf{x},\mathbf{y}) \in [N]$ such that
    \[(x_k - y_k)[q_k(\mathbf{x}) - q_k(\mathbf{y})] \geq 0.\]
\end{definition}

Note that if, in this definition, we let $\mathbf{y}$ be the Nash equilibrium, the condition that $-G$ is a $\mathrm{P}_0$-function precisely shows that it the equilibrium is weakly Pareto optimal. In order to show that $-G$ is $\mathrm{P}_0$, we will first show that the negative game Jacobian $-\mathbf{J}$ is a $\mathrm{P}_0$-matrix whenever it is uniformly stable. A \emph{$\mathrm{P}_0$-matrix} is a generalization of positive semi-definiteness to asymmetric square matrices:

\begin{definition}[$\mathrm{P}_0$-matrix, \cite{fiedler1966some}]
    A matrix $A \in \mathbb{R}^{N \times N}$ is a \emph{$\mathrm{P}_0$-matrix} if all principal minors of $A$ are non-negative. That is, for all subsets $S \subset [N]$, the determinant of the principal submatrix $A[S^c]$ is non-negative, where $A[S^c]$ is the matrix $A$ after removing all rows and columns with index in $S$.
\end{definition}

A classical result shows that a smooth function is $\mathrm{P}_0$ if its Jacobian is $\mathrm{P}_0$ everywhere:

\begin{theorem}[Corollary 5.3 of \cite{more1973pfunctions}] \label{thm:P0-function}
    Let $Q : [0,1]^N \to \mathbb{R}^N$ be Fréchet differentiable. If the Jacobian $Q'(\mathbf{x})$ is a $\mathrm{P}_0$-matrix for each $\mathbf{x} \in [0,1]^N$, then $Q$ is a $\mathrm{P}_0$-function.
\end{theorem}


It is not too difficult to see why uniformly stable real matrices must be $\mathrm{P}_0$. Uniform stability implies that all eigenvalues of any principal submatrix is also purely imaginary. Then, as each non-zero imaginary eigenvalue $i \lambda$ comes with a conjugate pairing $- i \lambda$, their product is $-i^2\lambda^2 = \lambda^2 \geq 0$ is non-negative. The corresponding principal minor is the product of all of these eigenvalues, so it is also non-negative. 

Finally, to see why all eigenvalues of any principal submatrix are imaginary, fix any subset $S \subset [N]$. Let $D_S^\varepsilon$ be the diagonal matrix $\mathrm{diag}(d_1^\varepsilon,\ldots, d_N^\varepsilon)$ where
\[d_i^\varepsilon = \begin{cases}
    \varepsilon & i \in S \\
    1 & i \notin S.
\end{cases}\]
By uniform stability, the matrix product $D_S^\varepsilon A$ has purely imaginary eigenvalues for all $\varepsilon > 0$. The limit of $D_S^0 A$ must also have a purely imaginary spectrum, by the continuity of eigenvalues. This matrix $D_S^0 A$ can also be constructed by taking $A$ and zeroing out all the rows and columns with indices in $S$. It follows that the eigenvalues of the principal submatrix $A[S^c]$ are contained in $\mathrm{spec}(D_S^0A)$ and thus are purely imaginary.

\section{Convergence under uniform stability}
\label{sec:convergence}
The fundamental result by \cite{hart2003uncoupled} shows that, in general, uncoupled dynamics cannot be guaranteed to asymptotically converge to mixed Nash equilibria. In this section, we refine this result for smoothed best-response learning dynamics \eqref{eqn:dynamics}, providing conditions for when the dynamics are unstable and fail to converge, and when they are stable and converge. We consider two stability classes for mixed equilibria: those that are pointwise uniformly stable, and those that are locally uniformly stable.

For simplicity, we consider a Nash equilibrium $\NE$ that is approximated by the sequence of $\beta$-smoothed equilibria $\SE$ as $\beta$ goes to zero (see \Cref{lem:convergence-to-Nash}). The first result shows that when $\NE$ is \emph{not} uniformly stable, then the $(\betaBRs,\eta)$-averaging dynamics can become unstable around $\SE$ once $\beta$ becomes sufficiently small. Intuitively, this means that $\NE$ is `inapproximable' by generic smoothed best-response dynamics. In contrast, our second result shows that if $\NE$ is contained in a region that is uniformly stable, then it can be approximated to arbitrary precision by generic, smoothed best-response dynamics. It turns out that the price of higher precision, and in turn smaller $\beta$, is a slower rate of convergence due to the requirement of a smaller learning rate $\eta$. Informally:

\begin{center}
\begin{minipage}{0.95\linewidth}
    \textbf{Non-convergence result.} If a Nash equilibrium $\NE$ is not uniformly stable, then there are smoothed best-responses that cannot be stabilized: there are regularizers $\regs$ so that when $\beta > 0$ is sufficiently small, all $(\betaBRs, \eta)$-averaging dynamics become unstable about the smoothed equilibrium $\SE$ (\Cref{prop:non-convergence-general}).
\end{minipage}
\end{center}
\begin{center}
\begin{minipage}{0.95\linewidth}
    \textbf{Convergence result.} If a Nash equilibrium is $\NE$ is contained in a region that is uniformly stable, then every smoothed best-responses can be stabilized: for all $\betaBRs$ and sufficiently small $\eta > 0$, the $(\betaBRs, \eta)$-averaging dynamics locally converges to $\SE$ (\Cref{thm:main-convergence}).
\end{minipage}
\end{center}

\begin{figure}[t]
    \centering
    \begin{minipage}{0.44\linewidth}
    \centering
    
    \includegraphics[width=0.95\linewidth]{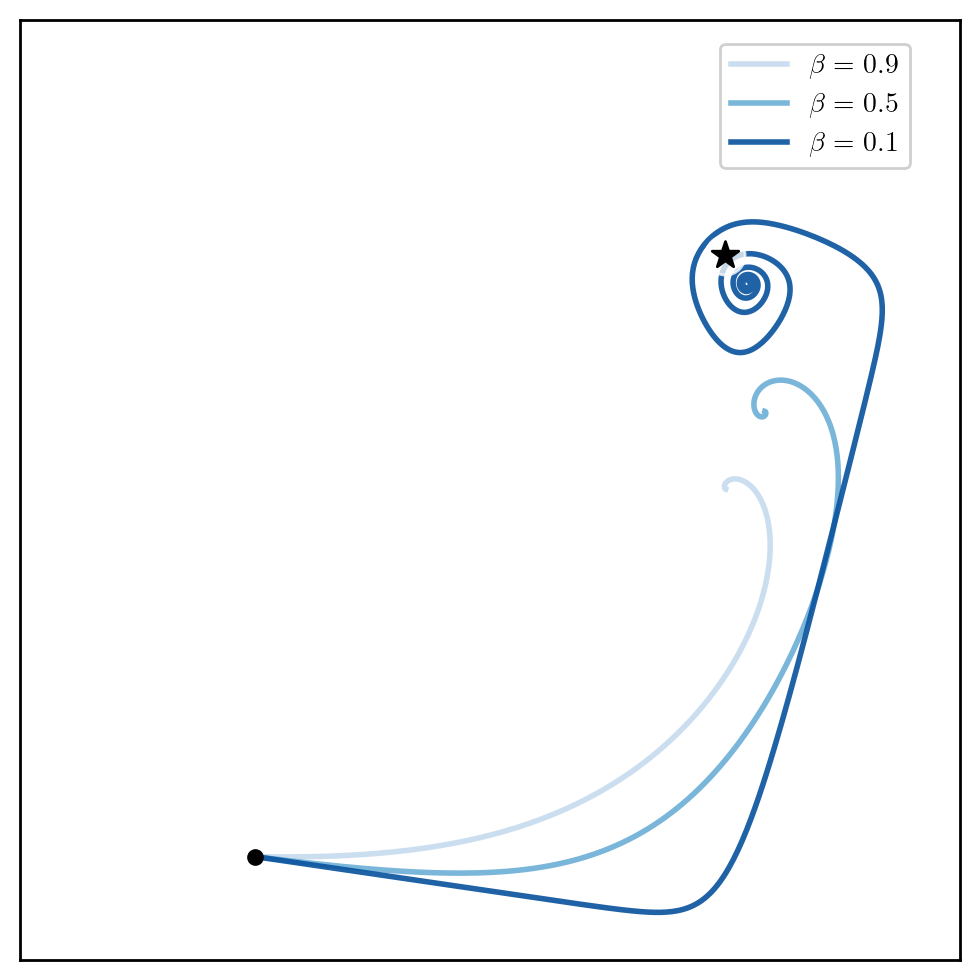}

    \small (a) Dynamics across smoothing parameters $\beta$
    \end{minipage}\hspace{10pt}
    \begin{minipage}{0.44\linewidth} 
    \centering
    
    \includegraphics[width=0.95\linewidth]{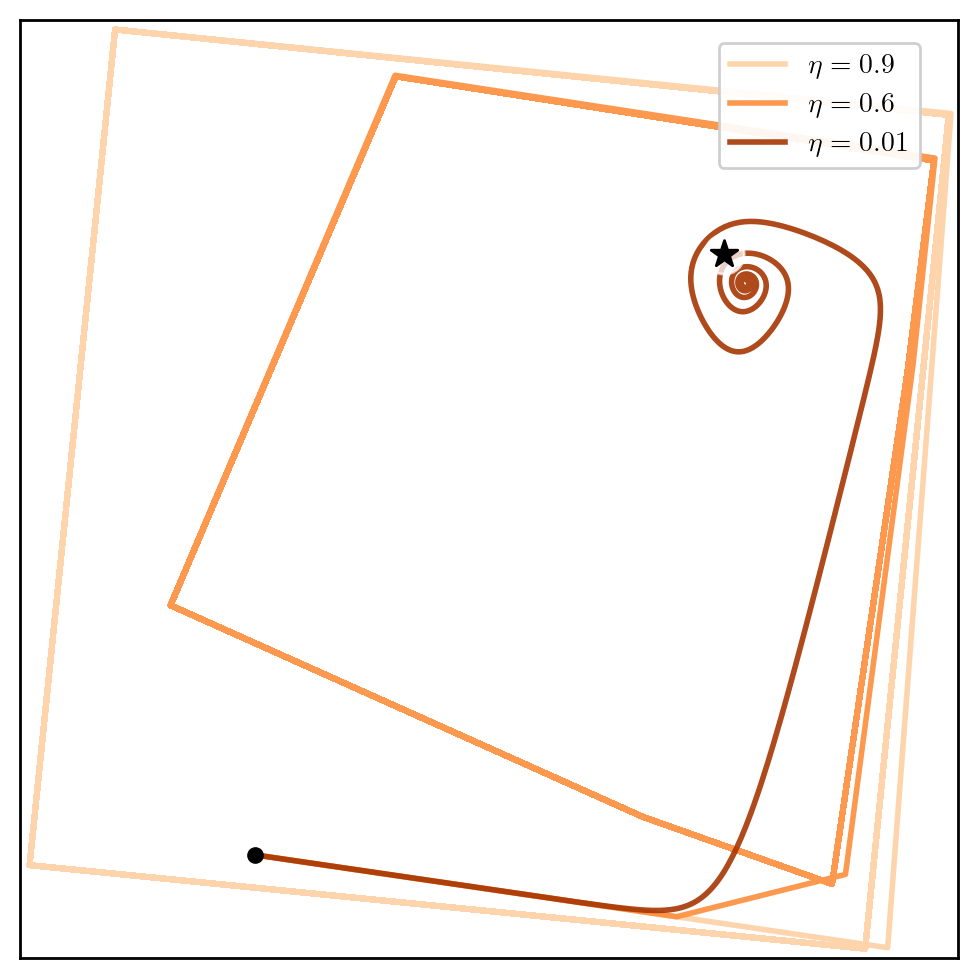}

    \small  (b) Dynamics across learning rates $\eta$
    \end{minipage}
    
    \caption{The trajectories of $\beta$-smoothed best-response dynamics with $\eta$-learning rate toward a uniformly-stable, mixed Nash equilibrium (star) in a two-player normal-form game. (a) The $\beta$-smoothed equilibria become better approximations of the Nash equilibrium as $\beta$ shrinks, but the dynamics become less stable and exhibit more cycling. The figure on the left plots the trajectories initialized at the black dot for varying smoothing $\beta$ and fixed averaging $\eta$. (b) The learning dynamics converge to $\beta$-smoothed equilibria once $\eta$ becomes sufficiently small, and it does so with greater stability with smaller $\eta$. However, stability comes at the expense of slower rate of convergence. The figure on the right plots the trajectories for fixed $\beta$ and varying $\eta$.}
    \label{fig:eta-beta-visualization}
\end{figure}

\subsection{Stability concepts in dynamical systems}
Before we state the main results, recall the standard notion of stability from dynamical systems: 

\begin{definition}[Stability of fixed points] \label{def:stability}
    Let $\domains \subset \RR^d$ be a domain and let $\BR : \domains \to \domains$ be a transition map defining the discrete-time dynamical system $\strats(t+1) = \BR(\strats(t))$. Let $\NE \in \domains$ be a fixed point of $\BR$.
    \begin{itemize}
        \item The fixed point $\NE$ is \emph{stable} if for every $\epsilon > 0$, there exists some $\delta > 0$ such that:
        \[\phantom{\forall t \geq 0}\big\|\strats(0) - \NE\big\| < \delta \qquad \implies \qquad \forall t \geq 0, \quad \big\|\strats(t) - \NE\big\| < \epsilon.\]
        \item Moreover, it is \emph{asymptotically stable} if there exists some $\delta > 0$ such that:
        \[\phantom{\forall t \geq 0}\big\|\strats(0) - \NE\big\| < \delta \qquad \implies \qquad \lim_{t \to \infty} \,\big\|\strats(t) - \NE\big\| = 0.\]
    \end{itemize}
    We say that $\NE$ is \emph{unstable} if it is not stable.
\end{definition}

We will consider families of $\beta$-smoothed best-response dynamics, whose fixed points $\SE$ converge to a Nash equilibrium $\NE$ as $\beta$ goes to zero. This motivates the following notion of \emph{stabilization} to Nash equilibria, where each $\SE$ is an asymptotically-stable fixed point of sufficiently fine $(\betaBRs, \eta)$-averaging dynamics.

\begin{definition}[Stabilizing smoothed best-response dynamics]
    Let $(\domains, \utilities)$ be an $N$-player multilinear game with a Nash equilibrium $\NE$. Let $\regs$ be a set of smooth and strictly convex regularizers. We say that smoothed best-response dynamics with respect to $\regs$ \emph{stabilizes} to $\NE$ if the following hold:
    \begin{enumerate}
        \item[(a)] The $\beta$-smoothed equilibria $\SE$ converge to $\NE$ as $\beta$ goes to zero.
        \item[(b)] For each $\beta > 0$, the $\beta$-smoothed equilibrium $\SE$ is an asymptotically-stable fixed point of the $(\betaBRs, \eta)$-averaging dynamics \eqref{eqn:dynamics} whenever $\eta \in (0,1)$ is sufficiently small.
    \end{enumerate}
\end{definition}

\subsection{Convergence results for smoothed best-response dynamics} \label{subsec:non-convergence}

The first result shows that if a Nash equilibrium is not uniformly stable, then not only is it not possible to stabilize generic smooth best-response dynamics to it (the dynamics to $\SE$ are not asymptotically stable for arbitrarily small $\beta$), but in fact, the dynamics can become unstable. The following is proved in \Cref{sec:proof-non-convergence}.

\begin{restatable}[Inapproximability of unstable equilibria]{proposition}{inapprox} \label{prop:non-convergence-general}
    Let $(\domains, \utilities)$ be an $N$-player multilinear game with an interior Nash equilibrium $\NE$. Suppose that $\NE$ is not uniformly stable. Then, there is some collection of smooth and strictly convex regularizers $\regs$ such that smoothed best-response dynamics does not stabilize to $\NE$. In fact, even if the sequence of $\beta$-smoothed equilibria $\SE$ converges to $\NE$, each $\SE$ is an unstable fixed point of the $(\betaBRs, \eta)$-averaging dynamics \eqref{eqn:dynamics} for all $\eta \in (0,1)$ and sufficiently small $\beta > 0$.
\end{restatable}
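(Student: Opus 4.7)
The plan is to translate the failure of uniform stability at $\NE$ into a concrete spectral instability of the averaging dynamics' Jacobian at the nearby smoothed equilibria $\SE$, holding uniformly across all learning rates $\eta \in (0,1)$ once the smoothing $\beta$ is small enough. The key identity is \Cref{lem:gradient-sbr}, which reduces the Jacobian of the transition map $\strats \mapsto (1-\eta)\strats + \eta\betaBRs(\strats)$ at its fixed point to
\[(1-\eta)I + \frac{\eta}{\beta}\Hs(\SE)^{-1}\Jacobian(\SE),\]
and whose spectrum we can then engineer through the choice of regularizers using \Cref{lem:choice-regularizer}.

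First, I would exploit the hypothesis to pick a positive-definite, block-diagonal matrix $\Hs^*$ for which $\Hs^{*-1}\Jacobian(\NE)$ has an eigenvalue $\lambda_0$ with $\mathrm{Re}(\lambda_0) \neq 0$. Because the block diagonals of $\Jacobian(\NE)$ vanish (\Cref{def:game-jacobian}), $\mathrm{tr}\bigl(\Hs^{*-1}\Jacobian(\NE)\bigr) = 0$, so the non-imaginary eigenvalues split into pairs of opposite-signed real parts; without loss of generality $\mathrm{Re}(\lambda_0) > 0$. Next I would invoke \Cref{lem:choice-regularizer} to obtain steep, strictly convex regularizers $\regs$ whose Hessian at $\NE$ realizes $\Hs^*$ up to a positive scaling (which is harmless since it can be absorbed into $\beta$).

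With these regularizers fixed, $\strats\mapsto \Hs(\strats)$ and $\strats\mapsto \Jacobian(\strats)$ are continuous, so along the sequence $\SE \to \NE$ (from the hypothesis, cf.\ \Cref{lem:convergence-to-Nash}) the matrices $\Hs(\SE)^{-1}\Jacobian(\SE)$ converge to $\Hs(\NE)^{-1}\Jacobian(\NE)$. By continuity of the spectrum there is an eigenvalue $\lambda_\beta$ of $\Hs(\SE)^{-1}\Jacobian(\SE)$ with $\lambda_\beta \to \lambda_0$; in particular $\mathrm{Re}(\lambda_\beta) \geq \tfrac{1}{2}\mathrm{Re}(\lambda_0) > 0$ for all sufficiently small $\beta$. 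The corresponding eigenvalue of the averaging Jacobian at $\SE$ is $\mu_{\beta,\eta} = (1-\eta) + \tfrac{\eta}{\beta}\lambda_\beta$, whose real part is $1 - \eta + \tfrac{\eta}{\beta}\mathrm{Re}(\lambda_\beta)$. This strictly exceeds $1$ whenever $\mathrm{Re}(\lambda_\beta) > \beta$, and consequently $|\mu_{\beta,\eta}| > 1$ for every $\eta \in (0,1)$. A standard discrete-time dynamical systems fact then yields that a smooth fixed point whose Jacobian has an eigenvalue of modulus strictly greater than $1$ is Lyapunov-unstable, so each such $\SE$ is unstable, and the smoothed best-response dynamics fails to stabilize to $\NE$.

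The main subtlety I expect is engineering uniformity in $\eta$: a priori one might fear that taking $\eta$ very small could restore stability, as $(1-\eta)I$ pulls the Jacobian toward the identity. The resolution is the scaling mismatch between the two summands---the identity contribution loses at most $\eta$ in real part, while the destabilizing direction grows like $\tfrac{\eta}{\beta}\mathrm{Re}(\lambda_\beta)$---so once $\beta < \mathrm{Re}(\lambda_\beta)$ the destabilization wins for every positive $\eta$. A secondary care point is checking that \Cref{lem:choice-regularizer} indeed realizes any prescribed positive-definite block-diagonal Hessian at $\NE$ via a globally steep regularizer, but this is precisely what that lemma is designed to guarantee.
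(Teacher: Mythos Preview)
Your proposal is correct and follows essentially the same route as the paper: use the failure of uniform stability plus the zero-trace observation to get an eigenvalue of $\Hs^{*-1}\Jacobian(\NE)$ with strictly positive real part, realize $\Hs^*$ as the Hessian of a regularizer via \Cref{lem:choice-regularizer}, then use continuity of eigenvalues along $\SE\to\NE$ together with \Cref{lem:gradient-sbr} to show the averaging Jacobian at $\SE$ has an eigenvalue of modulus exceeding $1$ for all $\eta\in(0,1)$ once $\beta$ is small. The only cosmetic difference is that the paper first fixes an open neighborhood $U$ of $\NE$ on which the real-part lower bound persists and then argues $\SE\in U$, whereas you track the eigenvalue directly along the sequence; both arguments are equivalent.
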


Now, we present the main convergence result concerning Nash equilibria with local uniform stability. The following is proved in \Cref{sec:proof-main-convergence}.

\begin{restatable}[Convergence to equilibria in uniformly-stable regions]{theorem}{mainconv} \label{thm:main-convergence}
    Let $(\domains,\utilities)$ be an $N$-player multilinear game with a unique Nash equilibrium $\NE$ that is locally uniformly stable. Over arbitrary choice of smooth and strictly convex regularizer $\regs$, all smoothed best-response dynamics locally stabilize to $\NE$. There is a constant $W > 0$ such that the following holds. For each choice of $\beta > 0$ such that $\SE$ is uniformly stable, there exists a radius $r_\beta > 0$ and constant $C_\beta > 0$ such that whenever $\|\strats(0) - \SE\| < r_\beta$, the $(\betaBRs,\eta)$-averaging dynamics converge when the learning rate is bounded $\smash{\eta \leq \frac{1}{2} \frac{\beta^2}{\beta^2 + W^2}}$:
    \[\|\strats(0) - \SE\| \leq C_\beta \exp\left( - \frac{\eta t}{2}\right).\]
    In particular, for sufficiently small $\beta$, the right-hand side can be made less than any $\varepsilon > 0$ whenever
    \[t = \Omega\left(\frac{1}{\beta^2} \bigg( d \log d + d \log \frac{1}{\beta}+ \log \frac{1}{\epsilon}\bigg)\right),\]
    where $d = d_1 + \dotsm + d_N$ is the degree of freedom across all players.
\end{restatable}

Using the fact that $\beta$-smoothed equilibria are $O(\beta)$-approximate Nash equilibria, this implies that smoothed best-response dynamics with averaging achieve $T^{-1/2}$-convergence rate to Nash equilibria.

\section{Uniform stability for partially-mixed equilibria}
\label{sec:boundary}
The main convergence result (\Cref{thm:main-convergence}) can be extended to equilibria that are not fully mixed, and are not interior equilibria. A Nash equilibrium $\NE$ is on the boundary of the probability simplex if there are players who never play certain actions. This can happen if these actions are strictly dominated, in which case a perfectly rational player will never play those strategies, while a player with bounded rationality will play them with decreasing frequency. Intuitively, as long as the players are not overly irrational, the stability of smoothed best-response dynamics around these equilibria should not hinge upon these suboptimal actions. 

Before we make this intuition formal, recall some notations for normal-form game:
\begin{itemize} 
    \item The joint strategy space $\domains = \Omega_1 \times \dotsm \times \Omega_N$ is a product of probability simplices, $\Omega_n = \Delta^{k_n - 1}$, where the $n$th player may randomize over $k_n$ pure strategies.
    \item The vertices of the simplex $\Omega_n$ correspond to pure strategies of player $n$. For each alternative $i \in [k_n]$, we let $e_{n,i} \in \Omega_n$ denote the pure strategy where the $n$th player deterministically chooses the $i$th action.
\end{itemize}
The \emph{support} of a strategy is the set of pure strategies with positive probability of being played:

\begin{definition}[Support of a strategy]
    For any player $n$, the \emph{support} of a strategy $x_n \in \Omega_n$ is the set:
    \[\mathrm{supp}(x_n) = \big\{i \in [k_n] : x_{n,i} \ne 0 \big\}.\]
    Given $\strats \in \domains$, define the reduced strategy sets:
    \begin{equation}  \label{eqn:strategy-restriction}
        \Omega_n(x_n) := \big\{x_n' \in \Omega_n : \mathrm{supp}(x_n')\subset \mathrm{supp}(x_n)\big\} \qquad \textrm{and}\qquad \domains(\strats) := \prod_{i \in [N]} \Omega_n(x_n).
    \end{equation} 
    Note that $\Omega_n(x_n)$ is an $(k_n' - 1)$-dimensional simplex where $k_n' = |\mathrm{supp}(x_n)|$.
\end{definition}

For the convergence result, we consider \emph{quasi-strict equilibria}, which are Nash equilibria where each player fully mixes on the set of best responses \citep{harsanyi1988general,van1992refinements}.\footnote{Classically, these are called \emph{quasi-strong} equilibria. We prefer the term \emph{quasi-strict} since \emph{strong Nash equilibria} is another, unrelated refinement of the Nash equilibrium concept, which we also make use of. For discussion, see \cite{van1992refinements}.} This captures the property that if an alternative is never played in an equilibrium, then it must be strictly dominated.

\begin{definition}[Quasi-strict equilibrium]
    We say that a Nash equilibrium $\NE$ of an $N$-player normal form game is \emph{fully-mixed on the best-response set} or \emph{quasi-strict} if for each player $n \in [N]$:
    \[i \in \mathrm{supp}(x_n^*) \qquad \Longleftrightarrow \qquad i \in \argmax_{i \in [k_n]}\, f_n(e_{n,i}; \NE_{-n}) \]
\end{definition}

Since actions that are not supported by a quasi-strong equilibrium are dominated, we expect that they are strategically irrelevant. This motivates the following refinement of the game \citep{harsanyi1988general}, where the irrelevant actions are removed from the game. 

\begin{definition}[Reduced game]
    Let $(\domains, \utilities)$ be a normal-form game with a Nash equilibrium $\NE$. The \emph{reduced game} around $\NE$ is the game on the domain $\domains(\NE)$ with the utilities $\utilities$ restricted to $\domains(\NE)$.
\end{definition}

In the reduced game, $\NE$ becomes an interior or fully-mixed Nash equilibrium (\Cref{cor:reduction-eq}), allowing the results in \Cref{subsec:non-convergence} to be applied. The remaining problem is about how the players learn to converge towards the support of the equilibrium and avoid playing the suboptimal strategies that have been removed from the reduced game. In order to control the rate at which these suboptimal strategies appear in the $\beta$-smoothed equilibria $\SE$, we introduce the following \emph{quantitative} version of steepness (c.f.\ \Cref{def:steep-reg}):

\begin{definition}[Linear steepness]
    \label{def:linear-steepness}
    Let $h : \Delta^{k-1} \to \RR$ be a steep regularizer. For all $v \in \RR^k$, define the map: 
    \begin{equation*}
        x^\beta(v) = \argmax_{x \in \Delta^{k-1}}\, v^\top x - \beta h(x).
    \end{equation*}
    For each $i \in [i]$ and $\epsilon > 0$, let $V_i(\epsilon) = \{v \in \RR^k : v_i < \max_{j \in [k]} v_j - \epsilon\}$ be the set of vectors where the $i$th component is $\epsilon$-suboptimal.  We say that $h$ is \emph{linearly steep} if the following holds for all $i \in [k]$ and $\epsilon > 0$:
    \[\lim_{\beta \to 0}\, \sup_{v \in V_i(\epsilon)} \frac{x^\beta(v)_i}{\beta} =0.\]
\end{definition}

In words, this states that the probability mass $x^\beta(v)_i$ placed on any alternative $i \in [k]$ must shrink sublinearly with $\beta$ when the value $v_i$ is $\epsilon$-suboptimal. In the case that $h$ is the entropy map, \Cref{ex:entropy-linear-steep} shows that, in fact, this distance shrinks as $\exp(-\Omega(1/\beta))$ at a much faster, exponential rate. As a consequence of linear steepness, the following shows that the rate at which suboptimal alternatives are played in $\beta$-smoothed equilibria $\SE$ also vanish at a sublinear rate $o(\beta)$:

\begin{restatable}[Dominated strategies occur at $\beta$-sublinear rates in $\beta$-smoothed equilibria]{lemma}{dominatedrate}\label{lem:steep-convergence}
    Let $(\domains, \utilities)$ be an $N$-player normal-form game and $\regs$ be a collection of linearly steep regularizers. Suppose that as $\beta$ goes to zero, the sequence of of $\beta$-smoothed equilibria $\SE$ converges to a quasi-strict equilibrium $\NE$. Let $\Pi : \domains \to \domains(\NE)$ be the orthogonal projection onto the support of $\NE$. Then:
    \[\lim_{\beta \to 0}\, \frac{\big\|\SE - \Pi \SE\big\|}{\beta} = 0.\]
\end{restatable}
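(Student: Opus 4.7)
I would reduce the claim to a coordinate-by-coordinate application of linear steepness. Write $S_n = \mathrm{supp}(x_n^*)$. Since $\Pi$ is the orthogonal projection onto the product of subsimplices $\domains(\NE) = \prod_n \Omega_n(x_n^*)$, a direct projection computation on each simplex face provides a constant $C$ depending only on the dimensions $k_n$ such that
\[\big\|\SE - \Pi \SE\big\| \;\leq\; C \sum_{n=1}^N \sum_{i \notin S_n} x_{n,i}^\beta.\]
There are only finitely many terms on the right, so the lemma reduces to showing that for each player $n$ and each off-support alternative $i \notin S_n$, the ratio $x_{n,i}^\beta / \beta$ tends to zero as $\beta \to 0$.

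\textbf{Key steps.} Fix such an $(n, i)$. Because $\NE$ is quasi-strict, the pure strategy $e_{n,i}$ is strictly suboptimal against $\NE_{-n}$: there exists $\epsilon_0 > 0$ with
\[f_n(e_{n,i};\, \NE_{-n}) \;<\; \max_{j \in [k_n]} f_n(e_{n,j};\, \NE_{-n}) \;-\; 2\epsilon_0.\]
Let $v^\beta \in \RR^{k_n}$ be the pure-payoff vector with $v^\beta_j = f_n(e_{n,j};\, \SE_{-n})$, and $v^* \in \RR^{k_n}$ its counterpart at $\NE$. The hypothesis $\SE \to \NE$ combined with the continuity of $f_n$ (a polynomial) gives $v^\beta \to v^*$, so by the strict gap, $v^\beta \in V_i(\epsilon_0)$ in the notation of \Cref{def:linear-steepness} for all sufficiently small $\beta$. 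Because $f_n$ is multilinear, the fixed-point relation $x_n^\beta = \betaBR_n(\SE)$ unwinds to $x_n^\beta = x^\beta(v^\beta)$ for the regularized maximizer $x^\beta$ attached to $h_n$. Applying linear steepness of $h_n$ at the index $i$ with gap $\epsilon_0$ yields
\[\frac{x_{n,i}^\beta}{\beta} \;=\; \frac{x^\beta(v^\beta)_i}{\beta} \;\leq\; \sup_{v \in V_i(\epsilon_0)} \frac{x^\beta(v)_i}{\beta} \;\xrightarrow{\beta \to 0}\; 0.\]

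\textbf{Conclusion and main obstacle.} Summing these vanishing bounds across the finitely many off-support pairs $(n,i)$ and substituting into the preceding display yields $\|\SE - \Pi\SE\|/\beta \to 0$, as required. The only real subtlety is that linear steepness only supplies the $o(\beta)$ rate over a \emph{fixed} suboptimality gap, while $v^\beta$ merely \emph{converges} to $v^*$; this is handled by choosing $\epsilon_0$ strictly smaller than half the true gap at $\NE$ and absorbing the perturbation $v^\beta - v^*$ into the slack. Beyond that, the argument is mechanical: linear steepness of the $h_n$ is tailored precisely to force off-support mass to decay sublinearly in $\beta$, and quasi-strictness of $\NE$ supplies exactly the strict payoff gap on which that hypothesis fires.
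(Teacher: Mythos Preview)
Your proposal is correct and follows essentially the same approach as the paper: both arguments use quasi-strictness to obtain a strict payoff gap for each off-support alternative, continuity of the utilities to transfer that gap from $\NE$ to $\SE$ for small $\beta$ (the paper packages this step as \Cref{lem:local-domination}), and then linear steepness coordinate-by-coordinate to force $x_{n,i}^\beta/\beta \to 0$, finishing by summing over the finitely many off-support pairs. The only cosmetic difference is that the paper states the projection bound with an explicit constant $2$ via the triangle inequality, whereas you absorb it into a generic $C$.
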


Before we can give the main convergence result for all quasi-strict equilibria, we need a technical condition to ensure that pseudoinverse $\nabla^2 h(x)^{+}$ extends smoothly to the boundary of the simplex:

\begin{definition}[Proper regularizer] \label{def:extended-smoothness}
    A map $h : \Delta^{k-1} \to \RR$ on the simplex is a \emph{proper regularizer} if:
    \begin{itemize}
        \item The restriction of $h$ onto each face of the simplex is steep regularizer.
        \item The Moore-Penrose pseudoinverse of the Hessian $\nabla^2 h(x)^{+}$ is smooth on the simplex.
    \end{itemize}
\end{definition}

See \Cref{sec:calculus} for formal definition of the \emph{faces} of the simplex and what it means to be \emph{smooth} on it. The following convergence result shows that if $\NE$ is a quasi-strict equilibrium, it suffices to analyze its uniform stability within the reduced game, neglecting all suboptimal alternatives. This is because whenever the regularizers are linearly steep, any potentially de-stabilizing effect of these additional strategies are overwhelmed by the stabilizing effects of the averaging dynamics.

\begin{restatable}[Convergence to non-interior equilibria]{theorem}{boundaryconv}\label{thm:boundary-convergence}
    Let $(\domains,\utilities)$ be an $N$-player, normal-form game and let $\regs$ be a collection of linearly steep, proper regularizers. Suppose that $\NE$ is a quasi-strict Nash equilibrium and that it is the limit of the $\beta$-smoothed equilibria $\SE$ as $\beta$ goes to zero. If the reduced game around $\NE$ is locally uniformly stable, then smoothed best-response dynamics stabilize to $\NE$. In particular, there exists a constant $C_\utilities > 0$ such that for all sufficiently small $\beta > 0$ and $\eta \leq C_\utilities \beta^2$,  the $(\betaBRs, \eta)$-averaging dynamics is locally asymptotically stable at $\SE$, where the operator norm of its Jacobian at $\SE$ is bounded by:
    \[\big\|(1 - \eta) \mathbf{I} + \eta \nabla \betaBRs(\SE)\big\|_2 \leq \exp\left(- \frac{\eta}{2}\right).\]
\end{restatable}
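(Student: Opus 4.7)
The plan is to reduce the boundary case to the interior convergence result, Theorem \ref{thm:main-convergence}, by showing that near $\SE$ the smoothed best-response Jacobian decomposes, up to vanishing error, into an on-support block that behaves like the smoothed best-response Jacobian of the reduced game on $\domains(\NE)$, and harmless off-support blocks. First I would linearize the dynamics at $\SE$: its Jacobian is $(1-\eta)\mathbf{I} + \eta \nabla \betaBRs(\SE)$, and by the boundary extension of Lemma \ref{lem:gradient-sbr} one may write $\nabla \betaBRs(\SE) = \tfrac{1}{\beta}\Hs(\SE)^{+}\Jacobian(\SE)$ via the Moore--Penrose pseudoinverse, which extends smoothly to the face of $\domains$ containing $\NE$ thanks to the properness of $\regs$ (Definition \ref{def:extended-smoothness}).

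Next I would split the tangent space at each $x_n^\beta$ as $T_n^{\parallel}\oplus T_n^{\perp}$, with $T_n^{\parallel}$ tangent to $\Omega_n(x_n^*)$ and $T_n^{\perp}$ spanned by the suboptimal coordinates. By Lemma \ref{lem:steep-convergence}, linear steepness forces the off-support coordinates of $\SE$ to be $o(\beta)$. For proper regularizers---for entropy on the simplex, for instance, $(\nabla^2 h(x))^{+}$ has transverse entries of order $x_i$---this implies that $\tfrac{1}{\beta}\Hs(\SE)^{+}$ acts as an $o(1)$ operator on $T^{\perp}$. Since $\Jacobian(\SE)$ is bounded in $\beta$, the entire off-support block of $\nabla \betaBRs(\SE)$, together with the cross-blocks linking $T^{\parallel}$ and $T^{\perp}$, vanishes in operator norm as $\beta \to 0$. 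On the $T^{\parallel}$ block, continuity of $\Hs^{+}$ and $\Jacobian$ combined with $\Pi \SE \to \NE$ imply that this block converges to the smoothed best-response Jacobian of the reduced game at its interior fixed point.

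Now Theorem \ref{thm:main-convergence} applies directly to the reduced game, which is locally uniformly stable at $\NE$: it yields a constant $C_\utilities > 0$ such that for $\eta \le C_\utilities \beta^2$, the $T^{\parallel}$-block of $(1-\eta)\mathbf{I} + \eta \nabla\betaBRs(\SE)$ has spectral norm at most $\exp(-\eta/2)$. On $T^{\perp}$ the operator reduces to $(1-\eta)\mathbf{I}$ up to $o(\eta)$ perturbations, which also lies below $\exp(-\eta/2)$ since $1-\eta \le \exp(-\eta) \le \exp(-\eta/2)$; a standard Schur-complement estimate then combines the two diagonal block bounds in the presence of vanishing cross-blocks to yield the claimed operator-norm bound, and hence local asymptotic stability of $\SE$.

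The main obstacle will be making the block decomposition quantitative at a boundary point where $\Hs$ itself is singular---the interior proof of Theorem \ref{thm:main-convergence} uses the true inverse $\Hs(\strats)^{-1}$, which is unavailable on the face. Properness gives a smooth pseudoinverse, and linear steepness delivers the rate $o(\beta)$ on transverse coordinates needed to kill both the $T^{\perp}$-block and the cross-blocks uniformly for $\beta$ small; checking that the $o(\beta)$ bound from Lemma \ref{lem:steep-convergence} is strong enough to survive the $1/\beta$ prefactor (and that a single constant $C_\utilities$ suffices along the whole sequence $\SE$) is the delicate quantitative step, after which the result follows by assembling the block estimates.
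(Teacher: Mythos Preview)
Your plan uses the right ingredients (smoothness of $\Hs^{+}$ from properness, the $o(\beta)$ rate from Lemma~\ref{lem:steep-convergence}, local uniform stability of the reduced game, and the contraction calculation from Theorem~\ref{thm:main-convergence}), but it organizes them differently from the paper and contains a slip that breaks the Schur-complement step.

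\textbf{How the paper argues.} The paper does \emph{not} block-decompose the tangent space at $\SE$. It compares the full matrix $\mathbf{G}(\strats) := \Hs(\strats)^{+}\Jacobian(\strats)$ at $\SE$ to its value at the projected point $\Pi\SE \in \domains(\NE)$. Properness makes $\mathbf{G}$ smooth up to the face, so the mean-value theorem gives
\[
\left\|\tfrac{1}{\beta}\big(\mathbf{G}(\SE)-\mathbf{G}(\Pi\SE)\big)\right\|
\le \tfrac{1}{\beta}\sup\|\nabla\mathbf{G}\|\cdot\|\SE-\Pi\SE\| = o(1)
\]
via Lemma~\ref{lem:steep-convergence}. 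At the face point $\Pi\SE$, the pseudoinverse $\Hs(\Pi\SE)^{+}$ automatically annihilates $T^{\perp}$, so $\mathbf{G}(\Pi\SE)$ is block upper-triangular with the reduced-game Jacobian on the $T^{\parallel}$ block and zero on the $T^{\perp}$ block; its spectrum is therefore purely imaginary by the assumed local uniform stability. The contraction bound then follows from the same calculation as in Theorem~\ref{thm:main-convergence}, with the $o(1)$ perturbation absorbed into the $(1-\eta)$ slack. No block recombination is needed.

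\textbf{The gap in your argument.} From ``$\tfrac{1}{\beta}\Hs(\SE)^{+}$ acts as $o(1)$ on $T^{\perp}$'' and ``$\Jacobian(\SE)$ is bounded'' you conclude that \emph{both} cross-blocks of $\nabla\betaBRs(\SE)=\tfrac{1}{\beta}\Hs^{+}\Jacobian$ vanish. Only one does. Symmetry of $\Hs^{+}$ indeed makes its $T^{\perp}$-rows and $T^{\perp}$-columns $o(\beta)$, so the $T^{\parallel}\!\to\!T^{\perp}$ block of $\tfrac{1}{\beta}\Hs^{+}\Jacobian$ is $o(1)$. But for $v\in T^{\perp}$, the vector $\Jacobian(\SE)v$ generally lands in $T^{\parallel}$, where $\tfrac{1}{\beta}\Hs^{+}$ acts with size $O(1/\beta)$; hence the $T^{\perp}\!\to\!T^{\parallel}$ block of $\nabla\betaBRs(\SE)$ is $O(1/\beta)$, not $o(1)$. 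The linearized map is therefore only block \emph{triangular} up to $o(1)$, with an off-diagonal block of size $O(\eta/\beta)$. Your ``standard Schur-complement estimate in the presence of vanishing cross-blocks'' does not apply, and combining diagonal bounds of size $\exp(-\eta/2)$ with an off-diagonal of size $O(\eta/\beta)$ does not yield the claimed operator-norm bound. The paper's single-perturbation route via $\mathbf{G}(\Pi\SE)$ is precisely what avoids having to confront this asymmetric block structure at $\SE$.
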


The results of this section are proved in \Cref{sec:beyond-interior}.

\section{Discussion}
In this work, we introduce a theory of non-asymptotic stability for learning in games. We demonstrate via this theory a connection between uniform stability and collective rationality. On the learnability of the mixed Nash equilibria, we connect uniform stability with the last-iterate convergence behavior for the class of incremental, smoothed best-response dynamics. We close with a few open questions that we believe would be important to pursue.

\paragraph{Strengthening the necessity of uniform stability for convergence to mixed Nash equilibria} Our main non-convergence result (\Cref{prop:non-convergence-general}) shows that if an equilibrium $\NE$ is not uniformly stable, then there exist smoothed best-responses that cannot be stabilized. We believe this can be strengthened, where  `there exist' is replaced with `almost all' (in terms of the Hessian of the regularizer at $\NE$).

A second strengthening has to do with the economic properties of the non-converging behaviors. We have shown that players can avoid stabilizing to Pareto inefficient equilibria, which can be seen as a form of collective rationality. However, we have not analyzed if players escape inefficient equilibria in a collectively-rational way:

\vspace{7pt}
\noindent \textbf{Open Question 1.} {\itshape Are there uncoupled learning dynamics that not only do not stabilize to mixed equilibria that are not uniformly stable, but moreover locally escape them in collectively-rational ways. That is, whenever $\|\strats(t) - \NE\| < \epsilon$ is sufficiently close, there exists some $\delta > 0$ such that for some time, all players improve:
\[\forall s \in (t, t+ \delta) \quad \textrm{and}\quad n \in [N],\qquad f_n\big(\strats(s)\big) > f_n\big(\strats(t)\big).\]
}


\paragraph{Non-asymptotic analyses beyond smoothed best-response dynamics} Smoothed best-response dynamics is a natural family of learning dynamics. It would be of interest to consider other classes of updates that can lead to mixed Nash equilibria. 


\section*{Acknowledgments}
We express our gratitude toward Mike Jordan, for his inspiration and suggestions.
\newpage

\bibliography{references}

\newpage

\renewcommand{\theHlemma}{\thesection.\arabic{lemma}} 
\renewcommand{\theHtheorem}{\thesection.\arabic{theorem}} 
\renewcommand{\theHdefinition}{\thesection.\arabic{definition}} 
\renewcommand{\theHcorollary}{\thesection.\arabic{corollary}} 
\appendix

\section{Normal-form games and multilinear games} \label{sec:normal-form-games-exposition}

\subsection{Background for normal-form games}
In a \emph{normal-form game}, each player $n \in [N]$ chooses an action out of $k_n$ alternatives or pure strategies, indexed over $[k_n]$. To allow for randomized strategies, let $\Delta^{k_n-1}$ denote the \emph{probability simplex} over $[k_n]$,
\[\Delta^{k_n - 1} = \Big\{x_n \in \RR^{k_n} : \sum_{i \in [k_n]} x_{n,i} = 1 \textrm{ and } x_{n,i} \geq 0\Big\}.\]
The $n$th player's strategy space is $\Omega_n = \Delta^{k_n -1}$, where we say that $x_n \in \Omega_n$ is a \emph{mixed strategy profile}, with: 
\[\forall i \in [k_n],\qquad x_{n,i} = \textrm{probability that player $n$ plays action $i$}.\hspace{1em}\]
Assuming players randomize their actions independently, each joint mixed strategy $\strats = (x_1,\ldots, x_N) \in \domains$ defines the product distribution $x_1 \otimes \dotsm \otimes x_N$ over the joint space of alternatives $[k_1] \times \dotsm \times [k_N]$.

The utilities in a normal-form game are defined by a collection of \emph{payoff tensors} $T^1,\ldots, T^N \in \RR^{k_1 \times \dotsm \times k_N}$,
\[T^n_{i_1,\ldots,i_N} = \textrm{payoff for player $n$ when the $m$th player chooses action $i_m$ for $m \in [N]$}.\]
The utilities $\utilities$ of a normal-form game extend these payoffs to mixed strategies in all of $\domains$, and they compute the \emph{expected payoffs} when each player randomize independently:
\[f_n(\strats) = \sum_{i_1 \in [k_1]}\dotsm\sum_{i_N \in [k_N]} T^n_{i_1,\ldots, i_N} x_{1,i_1} \dotsm x_{N,i_N}.\]

\begin{lemma}[Normal-form games are multilinear games] \label{lem:normal-to-multilinear}
    Let $(\domains, \utilities)$ be an $N$-player normal-form game, where the $n$th player has $k_n$ alternatives. There is a multilinear game $(\tilde{\domains}, \tilde{\utilities})$ with $\tilde{\utilities} : \RR^{k_1 -1} \times \dotsm \times \RR^{k_N -1} \to \RR^N$ and an affine bijection $\phi : \mathrm{int}(\domains) \to \tilde{\domains}$ such that $\utilities(\strats) = (\tilde{\utilities} \circ \phi)(\strats)$ for all $\strats \in \mathrm{int}(\domains)$.
\end{lemma}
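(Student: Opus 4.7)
The plan is to explicitly construct $\phi$ by projecting out the redundant coordinate in each probability simplex, and then to verify that pulling the original payoff tensors back through this parameterization produces a genuine multilinear game in the sense of \Cref{def:multilinear-polynomial}. Concretely, for each $n\in[N]$ I would define
\[
\phi_n : \mathrm{int}(\Omega_n) \to \RR^{k_n - 1},\qquad \phi_n(x_n) = (x_{n,1},\ldots, x_{n,k_n-1}),
\]
with image $\tilde{\Omega}_n = \bigl\{\tilde{x}_n \in \RR^{k_n-1}_{>0} : \sum_{i=1}^{k_n - 1} \tilde{x}_{n,i} < 1\bigr\}$, which is open and convex. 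Setting $\phi = \phi_1 \times \dotsm \times \phi_N$ and $\tilde{\domains} = \tilde{\Omega}_1 \times \dotsm \times \tilde{\Omega}_N$, the map $\phi$ is visibly affine, and its inverse is the affine map $\phi^{-1}$ that appends the recovered coordinate $x_{n,k_n} = 1 - \sum_{i<k_n} \tilde{x}_{n,i}$ to each block.

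Next, I would take the candidate utilities to be $\tilde{f}_n := f_n \circ \phi^{-1}$, so that $\utilities = \tilde{\utilities} \circ \phi$ holds on $\mathrm{int}(\domains)$ by construction. It remains to verify multilinearity, i.e.\ that each $\tilde{f}_n$ is affine in $\tilde{x}_m$ for every $m \in [N]$. The original payoff
\[
f_n(\strats) = \sum_{i_1,\ldots, i_N} T^n_{i_1,\ldots, i_N}\, x_{1,i_1}\dotsm x_{N,i_N}
\]
is affine in each $x_m \in \RR^{k_m}$. The block of $\phi^{-1}$ that produces $x_m$ from $\tilde{x}_m$ is affine, so the composition is still affine in $\tilde{x}_m$ with the other blocks fixed. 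Extending the resulting polynomial from $\tilde{\domains}$ to all of $\RR^{k_1 - 1} \times \dotsm \times \RR^{k_N - 1}$ by the same formula then exhibits $\tilde{f}_n$ in the form $A_n(\tilde{\strats}_{-n})^\top \tilde{x}_n + b_n(\tilde{\strats}_{-n})$ required by \Cref{def:multilinear-polynomial}.

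Putting these pieces together yields a multilinear game $(\tilde{\domains}, \tilde{\utilities})$ together with the affine bijection $\phi$ satisfying $\utilities(\strats) = (\tilde{\utilities} \circ \phi)(\strats)$ on $\mathrm{int}(\domains)$. There is no real obstacle here: the content of the lemma is essentially a change of coordinates, and the only mildly delicate point is to confirm that collapsing the simplex constraint $\sum_i x_{n,i} = 1$ into a single eliminated coordinate preserves affineness block-by-block, which is immediate because substitution of an affine expression into a polynomial that is already affine in the relevant variables yields an affine function.
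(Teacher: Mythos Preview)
Your proposal is correct and is precisely the explicit construction the paper has in mind; the paper's own proof is a one-line remark that ``this follows from the form of the utility of a normal form game,'' and you have simply spelled out that change of coordinates in detail.
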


\begin{proof}[Proof of \Cref{lem:normal-to-multilinear}]
    This follows from the form of the utility of a normal form game.
\end{proof}

\subsection{Individual and collective rationality are generally incomparable}

\begin{example}[Strategic Pareto optimality does not imply Nash stability] \label{ex:strategic-not-nash}
    Consider the following 2-player normal-form game, which is purely strategic, since all rows and columns sum to zero:
    \begin{center}
    \begin{tabular}{c||c|c|c}
        $f_A$ & $B_1$ & $B_2$ & $B_3$ \\
        \hline\hline
        $A_1$ & $4$ & $-1$\, & $-3$\, \\
        \hline
        $A_2$ & $-10$\, & $\mathbf{2}$ & $\mathbf{8}$ \\
        \hline
        $A_3$ & $\mathbf{6}$ & $-1$\, & $-5$\, \\
    \end{tabular}
    \qquad
    \begin{tabular}{c||c|c|c}
        $f_B$ & $B_1$ & $B_2$ & $B_3$ \\
        \hline\hline
        $A_1$ & $4$ & $-10$\, & $\mathbf{6}$ \\
        \hline
        $A_2$ & $-1$\, & $\mathbf{2}$ & $-1$\, \\
        \hline
        $A_3$ & $-3$\, & $\mathbf{8}$ & $-5$\, \\
    \end{tabular}
    \end{center}
    Here, we show the payoff matrices for the $A$ and $B$ players, and bold the preferred action conditioned on the other player. In particular, the joint strategy $(A_2, B_2)$ is a Nash equilibrium, but not a strategic Pareto optimum, since it can be improved to $(A_1, B_1)$. And while $(A_1, B_1)$ is Pareto optimal, it is not a Nash equilibrium.
\end{example}

\section{Smoothed equilibria and dynamics}

\begin{lemma}[$\beta$-smoothed equilibria are $O(\beta)$-approximate Nash equilibria] \label{lem:smooth-to-Nash}
    Let $\regs$ be a set of strictly convex and bounded regularizers, where $\max h_n - \min h_n \leq C$ for $n \in [N]$, inducing the $\beta$-smoothed best-response map $\betaBRs$. If a joint strategy is a $\beta$-smoothed equilibrium of $\betaBRs$, then it is a $C\beta$-approximate Nash equilibrium.
\end{lemma}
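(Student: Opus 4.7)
The plan is to unpack the definition of a $\beta$-smoothed equilibrium and compare it to the Nash condition, controlling the gap with the bounded range of each regularizer. By assumption $\SE = \betaBRs(\SE)$, so for every player $n \in [N]$,
\[x_n^\beta \;=\; \argmax_{x_n' \in \Omega_n} \; f_n(x_n'; \SE_{-n}) - \beta h_n(x_n'),\]
which in particular means that the regularized payoff at $x_n^\beta$ is at least the regularized payoff at any alternative $x_n \in \Omega_n$.

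From this optimality inequality I would write, for an arbitrary $x_n \in \Omega_n$,
\[f_n(\SE) - \beta h_n(x_n^\beta) \;\geq\; f_n(x_n; \SE_{-n}) - \beta h_n(x_n),\]
and rearrange to isolate the unilateral deviation payoff:
\[f_n(x_n; \SE_{-n}) - f_n(\SE) \;\leq\; \beta\bigl(h_n(x_n) - h_n(x_n^\beta)\bigr).\]
The right-hand side is bounded by $\beta(\max h_n - \min h_n) \leq C\beta$ using the assumed range bound, which yields exactly the $C\beta$-approximate Nash condition across all $n$ and all $x_n$.

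There is no real obstacle here; the result is essentially a one-line reformulation of the first-order optimality encoded in the fixed-point equation together with the boundedness of the regularizer. The only thing to be careful about is that the bound on $h_n(x_n) - h_n(x_n^\beta)$ uses both endpoints of the range simultaneously (upper bound on $h_n(x_n)$, lower bound on $h_n(x_n^\beta)$), which is why the constant is the full oscillation $\max h_n - \min h_n$ rather than, say, a one-sided bound. Nothing else is needed beyond the definitions of $\beta$-smoothed best-response and of an $\epsilon$-approximate Nash equilibrium.
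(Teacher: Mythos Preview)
Your proof is correct and follows essentially the same approach as the paper: use the optimality of $x_n^\beta$ in the regularized problem to obtain $f_n(x_n;\SE_{-n}) - f_n(\SE) \le \beta(h_n(x_n) - h_n(x_n^\beta))$, then bound the regularizer difference by its oscillation $C$. The paper's proof is identical up to notation.
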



\begin{proof}[Proof of \Cref{lem:smooth-to-Nash}]
    Let $\strats$ be a $\beta$-smoothed equilibrium with respect to $\regs$. By the definition of $\betaBRs$, we have that the following inequality holds for any alternate strategy $z_n \in \Omega_n$;
    \[f_n(z_n; \strats_{-n}) - \beta h_n(z_n) \leq f_n(\strats) - \beta h_n(x_n).\]
    By rearranging and applying the boundedness of the regularizer, we obtain:
    \begin{align*} 
        f_n(z_n; \strats_{-n}) &\leq f_n(\strats) + \beta \big(h_n(z_n) - h_n(x_n)\big) 
        \\&\leq f_n(\strats) + C\beta.
    \end{align*}
    Thus, $\strats$ is a $C\beta$-approximate Nash equilibrium.
\end{proof}

\begin{lemma}[$\beta$-smoothed equilibria converge to Nash equilibria] \label{lem:convergence-to-Nash}
    Let $\domains$ be a convex, joint strategy space with a metric $\rho$. Let $(\domains, \utilities)$ be an $N$-player game with continuous utilities, and $\regs$ be a set of strictly convex and bounded regularizers. Suppose that the game has a unique Nash equilibrium $\NE$. Then:
    \[\lim_{\beta \to 0}\, \rho(\SE, \NE) = 0.\]
\end{lemma}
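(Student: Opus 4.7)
The plan is to argue by contradiction via a standard compactness plus limit argument, using \Cref{lem:smooth-to-Nash} as the bridge between smoothed and exact equilibria. Throughout I will implicitly use compactness of $\domains$, which holds in all the settings of interest (the joint strategy space is a product of probability simplices, hence compact).

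First I would suppose, for contradiction, that $\rho(\SE, \NE) \not\to 0$ as $\beta \to 0$. Then there is some $\epsilon_0 > 0$ and a sequence $\beta_k \to 0$ such that $\rho(\SE[\beta_k], \NE) \geq \epsilon_0$ for every $k$. By compactness of $\domains$, after passing to a subsequence (which I will still denote $\beta_k$) the corresponding smoothed equilibria converge to some limit $\strats^\infty \in \domains$, and by continuity of $\rho$ this limit satisfies $\rho(\strats^\infty, \NE) \geq \epsilon_0$, so in particular $\strats^\infty \neq \NE$.

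Next I would upgrade $\strats^\infty$ to a Nash equilibrium by pushing $\beta$ to zero inside the approximation. By \Cref{lem:smooth-to-Nash}, each $\SE[\beta_k]$ is a $C\beta_k$-approximate Nash equilibrium, i.e., for every player $n$ and every $z_n \in \Omega_n$,
\[f_n(z_n; \SE[\beta_k]_{-n}) \leq f_n(\SE[\beta_k]) + C\beta_k.\]
Fix $n$ and $z_n$, and take $k \to \infty$. The right-hand side converges to $f_n(\strats^\infty)$ by continuity of $f_n$ and since $\beta_k \to 0$, and the left-hand side converges to $f_n(z_n; \strats^\infty_{-n})$ by continuity of $f_n$ in the remaining arguments. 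Hence $f_n(z_n; \strats^\infty_{-n}) \leq f_n(\strats^\infty)$ for every $z_n \in \Omega_n$ and every $n$, so $\strats^\infty$ is a Nash equilibrium. Since the Nash equilibrium of the game is unique, $\strats^\infty = \NE$, contradicting $\rho(\strats^\infty, \NE) \geq \epsilon_0$.

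The only potential obstacle is invoking compactness of $\domains$ in order to extract the convergent subsequence; this is not stated in the hypotheses but is automatic for normal-form games where $\domains$ is a finite product of simplices, which is the intended setting. If one wanted to remove that assumption, one could instead argue purely with the net of smoothed equilibria and use the fact that any cluster point is a Nash equilibrium by the same limit argument above, and then appeal to uniqueness to conclude there is a single cluster point. Everything else is routine continuity.
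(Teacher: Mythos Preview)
Your argument is correct and is essentially the paper's approach: the paper invokes \Cref{lem:smooth-to-Nash} together with a separate lemma (\Cref{lem:approx-nash-convergence}) stating that any limit point of a sequence of $\epsilon_k$-approximate Nash equilibria with $\epsilon_k\to 0$ is Nash, and then uses uniqueness. You simply unroll that auxiliary lemma inline by passing to the limit in the defining inequality, which is equivalent. Your explicit acknowledgment that compactness of $\domains$ is needed to extract the convergent subsequence is a point the paper leaves implicit as well.
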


\begin{proof}[Proof of \Cref{lem:convergence-to-Nash}]
    This follows from \Cref{lem:smooth-to-Nash} and \Cref{lem:approx-nash-convergence}.
\end{proof}

\begin{lemma}[Convergence to Nash equilibria] \label{lem:approx-nash-convergence}
    Let $\domains$ be a joint strategy space with a metric $\rho$, and suppose that $(\domains, \utilities)$ is an $N$-player game with continuous utilities. Let $(\strats^{\epsilon_k})_{k \in \NN}$ be any sequence of $\epsilon_k$-Nash equilibria where $\epsilon_k \to 0$. If $\NE$ is a limit point of the sequence, then $\NE$ is a Nash equilibrium.
\end{lemma}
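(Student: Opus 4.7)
The plan is a direct passage to the limit: use continuity of the utilities to transfer the approximate equilibrium inequalities along a convergent subsequence, so that the vanishing slack $\epsilon_{k_j}$ drops out.

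First, since $\NE$ is a limit point of the sequence $(\strats^{\epsilon_k})$, extract a subsequence $\strats^{\epsilon_{k_j}} \to \NE$ in the metric $\rho$. Note $\epsilon_{k_j} \to 0$ as well, since it is a subsequence of $(\epsilon_k)$ which tends to zero. Fix an arbitrary player $n \in [N]$ and an arbitrary deviation $x_n \in \Omega_n$. The approximate Nash condition applied to $\strats^{\epsilon_{k_j}}$ with the unilateral deviation $(x_n; \strats_{-n}^{\epsilon_{k_j}})$ yields
\[
f_n\bigl(x_n,\,\strats_{-n}^{\epsilon_{k_j}}\bigr) \;\leq\; f_n\bigl(\strats^{\epsilon_{k_j}}\bigr) + \epsilon_{k_j}
\]
for every $j$.

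Second, pass to the limit as $j \to \infty$. Since $\strats^{\epsilon_{k_j}} \to \NE$ in $\domains$, the $n$th marginal also converges: $\strats_{-n}^{\epsilon_{k_j}} \to \NE_{-n}$. By continuity of $f_n$, the left-hand side converges to $f_n(x_n, \NE_{-n})$ and the utility term on the right converges to $f_n(\NE)$; the slack $\epsilon_{k_j}$ vanishes. The inequality is preserved under limits of reals, giving
\[
f_n(x_n, \NE_{-n}) \;\leq\; f_n(\NE).
\]
Since $n \in [N]$ and $x_n \in \Omega_n$ were arbitrary, $\NE$ satisfies the Nash equilibrium condition.

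There is no real obstacle here beyond ensuring the right subsequence is used (the hypothesis is only that $\NE$ is a limit point, not that the full sequence converges) and being careful that the continuity of $f_n$ is in the joint argument so that convergence of $\strats^{\epsilon_{k_j}}$ implies convergence of both $f_n(x_n, \strats_{-n}^{\epsilon_{k_j}})$ and $f_n(\strats^{\epsilon_{k_j}})$. Both points are immediate from the assumptions, so the lemma follows by two applications of continuity and the standard limit-preservation of weak inequalities.
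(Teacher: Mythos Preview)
Your proof is correct. It takes a genuinely different route from the paper's own argument. The paper proceeds set-theoretically: it defines $\mathbf{X}^\epsilon$ to be the set of $\epsilon$-Nash equilibria, observes that each $\mathbf{X}^\epsilon$ is closed (by continuity of the utilities), that these sets are nested with $\mathbf{X}^\epsilon \subset \mathbf{X}^{\epsilon'}$ whenever $\epsilon \le \epsilon'$, and that the Nash set is $\mathbf{X}^* = \bigcap_{\epsilon>0}\mathbf{X}^\epsilon$. Since $\epsilon_k \to 0$, the sequence eventually remains inside every fixed $\mathbf{X}^\epsilon$, and closedness forces the limit point $\NE$ into each $\mathbf{X}^\epsilon$, hence into $\mathbf{X}^*$.

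Your direct pass-to-the-limit on the defining inequality is more elementary and arguably cleaner for this single lemma; it avoids introducing the auxiliary sets and gets the result in one step. The paper's version, by contrast, isolates a reusable structural fact (the Nash set is a nested intersection of closed approximate-Nash sets), which can be handy if one later wants similar statements without re-running the limit argument. Both ultimately rest on the same continuity of $\strats \mapsto f_n(x_n,\strats_{-n})$ and $\strats \mapsto f_n(\strats)$, so neither buys additional generality over the other here.
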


\begin{proof}
   The result follows since:
    \begin{enumerate}
        \item Let $\mathbf{X}^\epsilon$ denote the set of $\epsilon$-Nash equilibria. It is a closed set for all $\epsilon > 0$.
        \item The $\epsilon$-Nash equilibria satisfies $\mathbf{X}^\epsilon \subset \mathbf{X}^{\epsilon'}$ whenever $\epsilon \leq \epsilon'$. Also, the set of Nash equilibria is given by:
        \[\mathbf{X}^* = \bigcap_{\epsilon > 0} \mathbf{X}^\epsilon.\]
        \item For each $\epsilon$, the sequence $\strats^{\epsilon_k}$ eventually remains in $\mathbf{X}^\epsilon$, and so $\NE \in \mathbf{X}^\epsilon$ for all $\epsilon$.
    \end{enumerate}
    This implies that $\NE$ is Nash.
\end{proof}

\section{Strategic equivalence and games in canonical form}

For many results of this work, we may often restrict our consideration to games in \emph{canonical form}:

\begin{definition}[Canonical form of a multilinear game] \label{def:canonical}
    Let $(\domains, \utilities)$ be an $N$-player multilinear game, where the joint strategy space includes the origin $\mathbf{0} \in \domains$. We say that the game is in \emph{canonical form} if the utilities have no non-strategic component and if the origin is a Nash equilibrium, with $\NE = \mathbf{0}$.
\end{definition}

We can do this without loss of generality because the solution concepts we are concerned with all respect strategic equivalence, namely strategic Pareto optimality (\Cref{def:strategic-pareto-optimality}) and uniform stability (\Cref{def:uniform-stability}). Therefore, it is enough to prove them for games that are purely strategic (i.e.\ they have no non-strategic component; see \Cref{def:strategic-component}). Furthermore, in multilinear games, we can assume that the Nash equilibrium is set at the origin $\NE = \mathbf{0}$. We may do this without loss of generality since the translation of a multilinear polynomial remains a multilinear polynomial.

\begin{lemma} \label{lem:strategic-equivalence}
    The dynamics \eqref{eqn:dynamics} are uncoupled and are preserved across strategic equivalence classes.
\end{lemma}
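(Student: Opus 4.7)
The plan is to verify the two assertions separately, both of which reduce to unpacking definitions. For uncoupledness, I will argue that the update rule for player $n$ in \eqref{eqn:dynamics}, namely
\[x_n(t) = (1 - \eta)\, x_n(t-1) + \eta\, \Phi_n^\beta\bigl(\strats(t-1)\bigr),\]
depends only on data that is accessible to player $n$: the learning rate $\eta$ (a fixed scalar parameter), the previous joint play $\strats(t-1)$ (observable in a repeated interaction), the regularizer $h_n$ (player $n$'s own), and the smoothed best-response component $\Phi_n^\beta(\strats) = \argmax_{x_n' \in \Omega_n} f_n(x_n'; \strats_{-n}) - \beta h_n(x_n')$, which by \Cref{def:smoothed-br} only requires $f_n$ (player $n$'s own utility) evaluated at candidate deviations of player $n$ holding $\strats_{-n}$ fixed. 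No knowledge of $f_m$, $h_m$, or the update rule of any other player $m \ne n$ is required, matching the informational constraint of \cite{hart2003uncoupled} that the paper adopts.

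For invariance under strategic equivalence, I will show that if $f_n \simeq_{\mathrm{SE}} f_n'$ then $\Phi_n^\beta = \Phi_n'^{\beta}$ pointwise, from which invariance of the full dynamics \eqref{eqn:dynamics} is immediate. By \Cref{def:strategic-component}, strategic equivalence means $f_n(\strats) - f_n'(\strats)$ does not depend on $x_n$; write this difference as $c_n(\strats_{-n})$. Then for every fixed $\strats_{-n}$ and every candidate $x_n' \in \Omega_n$,
\[f_n(x_n'; \strats_{-n}) - \beta h_n(x_n') = f_n'(x_n'; \strats_{-n}) - \beta h_n(x_n') + c_n(\strats_{-n}).\]
Adding a quantity that is constant in $x_n'$ does not change the argmax, so $\Phi_n^\beta(\strats) = \Phi_n'^{\beta}(\strats)$ for all $\strats$.

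Since the transition map in \eqref{eqn:dynamics} is just an affine combination of $\strats(t-1)$ with $\betaBRs(\strats(t-1))$, the fact that each coordinate map $\Phi_n^\beta$ is preserved across strategic equivalence classes propagates to the entire dynamics. I do not anticipate any substantive obstacle: both statements follow directly from the definition of smoothed best-response and the fact that argmax is insensitive to additive constants. The only small care to take is to note that strategic equivalence of $f_n$ and $f_n'$ is defined per player (varying the $n$th player's utility by a function of $\strats_{-n}$), and then to observe that invariance under each such substitution suffices to give invariance under arbitrary strategic-equivalence changes of the whole utility profile $\utilities$.
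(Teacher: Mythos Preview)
Your proposal is correct and follows essentially the same approach as the paper, which gives a one-line proof (``The dynamics depend only on the strategic components of the game''); you have simply spelled out the details that the paper leaves implicit. Both arguments rest on the observation that $\Phi_n^\beta$ depends on $f_n$ only through the terms involving $x_n$, so the additive non-strategic term $b_n(\strats_{-n})$ drops out of the argmax.
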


\begin{proof}
    The dynamics depend only on the strategic components of the game.
\end{proof}

\section[Proof of Lemma \ref*{lem:pd-stretch}]{Proof of \Cref{lem:pd-stretch}}
\label{sec:pd-stretch}

\pdstretch*
\begin{proof}
    ($\Rightarrow$). Suppose that $u^\top v > 0$. There are two cases:
    \begin{enumerate}
        \item If $u$ and $v$ are linearly dependent, then we can let $H = \lambda I$ where $\lambda = \|u\| / \|v\|$, so that $Hv = u$.
        \item Otherwise, $u$ and $v$ span two dimensions. We can choose an orthonormal basis $B$ such that: 
        \[B^\top u = \alpha_1 e_1 + \alpha_2 e_2 \quad \textrm{and}\quad B^\top v = \beta_1 e_1 + \beta_2 e_2, \qquad \alpha_1, \alpha_2, \beta_1, \beta_2 > 0.\]
        This is possible because $u^\top v > 0$. Now, let $\Lambda = \mathrm{diag}(\alpha_1/\beta_1, \alpha_2/\beta_2, 1,\ldots, 1)$ and set:
        \[H = B \Lambda B^\top,\]
        so that by construction, $u = Hv$.
    \end{enumerate}
    ($\Leftarrow$). Suppose that $u = Hv$ for some positive-definite matrix $H$. Then $u^\top v = v^\top H v > 0$.
\end{proof}

\section[Proof of Lemma \ref*{lem:gradient-sbr}]{Proof of \Cref{lem:gradient-sbr}}
\label{sec:gradient-sbf}

\gradientsbr*

\begin{proof}
    The map $x_n' \mapsto f_n(x_n', \strats_{-n}) - \beta_n h_n(x_n)$ is a strictly concave function over a convex set $\Omega_n$. And so, it has a unique maximizer, implying that $\Phi^\beta(\strats)_n$ is well-defined.
    
    Given any $\strats$ that maps to a fully mixed strategy $\betaBRs(\strats) \in \mathrm{int}(\domains)$, we can apply the Lagrange multiplier theorem to express $\betaBRs(\strats)_n$ for each $n \in [N]$ as the stationary point to the Lagrangian:
    \[\cL_i(x_n', \lambda; \strats) = f_n(x_n', \strats_{-n}) -  \beta h_n(x_n') - \lambda \ind^\top x_n',\]
    where $\ind \in \RR^{d_n}$ is the all-ones vector. This implies that $\Phi^\beta(\strats)_n$ is the unique stationary point satisfying:
    \[0 = \nabla_{x_n'} \cL_n\big(x_n', \lambda;\strats\big) = \nabla_{x_n'} \Big(f_n(x_n', \strats_{-n}) - \beta h_n(x_n') \Big) - \lambda \ind.\]
    Let $\Pi_n = \mathrm{I} - \frac{1}{d_n} \ind \ind^\top$ be the linear projection and $\Psi_n : \Omega_n \times \domains \to T\Omega_n$ be the map:
    \[\Psi_n(x_n', \strats) = \nabla_{n} \Big(f_n(x_n', \strats_{-n}) - \beta h_n(x_n')\Big).\]
    Then, $x_n' = \Phi_\beta(\strats)_n$ is the unique solution to  $\Psi_n\big(x_n', \strats\big) = 0$. By the implicit function theorem, we obtain the gradient of $\betaBRs$. The blocks on the diagonals are zero because $f_n$ is multilinear, while $\nabla_{n}^2h_n$ is positive definite on $\Omega_n$ since $h_n$ is strictly convex. 

    The smoothness of $\nabla \betaBRs$ follows from Cramer's rule: $\Hs$ is smooth and positive definite.
\end{proof}

\section[Proof of Theorem \ref*{thm:meaning-local}]{Proof of \Cref{thm:meaning-local}} \label{sec:meaning-local}

\Cref{thm:meaning-local} considers an $N$-player normal-form game with an interior equilibrium that is locally uniformly stable. To show that the equilibrium is locally strategically Pareto optimal, we may restrict ourselves to a sufficiently small open set relative to centered at the equilibrium. This restriction is a multilinear game (\Cref{lem:normal-to-multilinear}), and by taking an affine change of coordinates and by removing the non-strategic component, we may assume that the game is in canonical form (\Cref{def:canonical}). \Cref{thm:meaning-local} follows immediately from:

\begin{theorem}
Let $(\domains, \utilities)$ be a multilinear game in canonical form, so that $\mathbf{0}$ is a Nash equilibrium. Suppose that $\mathbf{0}$ is locally uniformly stable. Then, $\mathbf{0}$ is locally weakly Pareto optimal.
\end{theorem}

\begin{proof} 
Without loss of generality, suppose that there is an open box $U$ centered at $\mathbf{0}$ on which the conditions of uniform stability hold. We show that for each $\mathbf{x} \in U$, there is some player $k = k(\mathbf{x}) \in [N]$ such that
\begin{equation} \label{eqn:individual-non-improvement}
f_k(\mathbf{x}) \leq 0.
\end{equation}
Thus, the joint strategy $\mathbf{x}$ does not improve the utility of player $k$. Since this is true for arbitrary $\mathbf{x} \in U$, the origin $\mathbf{0}$ is locally weakly Pareto optimal.

If $x_i=0$ for some $i \in [N]$, then $f_i(\mathbf{x}) = x_i^\top \nabla_i f_i(\mathbf{x}) = 0$, since the utilities are multilinear and purely strategic. In this case, $\mathbf{x}$ cannot strictly improve every player. In the remainder, we fix any $\mathbf{x} \in U$ such that $x_i \ne 0$ for all $i \in [N]$.

Consider a restricted game where each player $i \in [N]$ is restricted to playing along the one-dimensional segment $\{t_i x_i : t_i \in [0,1]\}$. Let $\mathbf{t} = (t_1,\ldots, t_N) \in [0,1]^N$ and define the restricted game by
\[f^\mathbf{x}_i(\mathbf{t}) = f_i(\mathbf{t} \odot \mathbf{x}) =  f_i(t_1 x_1,\ldots, t_N x_N),\]
where $\mathbf{t} \odot \mathbf{x} = (t_1 x_1,\ldots, t_N x_N)$. As the utilities are multilinear and purely strategic at $\mathbf{0}$, we can write
\[f_i^\mathbf{x}(\mathbf{t}) = t_i g_i(\mathbf{t})\qquad \textrm{where }g_i(\mathbf{t}) := \partial_{t_i}f_i^\mathbf{x}(\mathbf{t}).\]
The chain rule gives $g_i(\mathbf{t}) = x_i^\top \nabla_i f_i(\mathbf{t} \odot \mathbf{x})$. Let $G \equiv (g_1,\ldots, g_N) : [0,1]^N \to \mathbb{R}^N$. Because $\mathbf{0}$ is a Nash equilibrium, we have that $g_i(\mathbf{0}) = 0$.

\begin{lemma} \label{lem:P0-function}
    The function $-G$ is a $\mathrm{P}_0$-function.
\end{lemma}

By definition of a $\mathrm{P}_0$-function, this means that for all $\mathbf{t} \in [0,1]^N$, there is an index $k = k(\mathbf{t})$ such that the inequality holds $$-f_k^\mathbf{x}(\mathbf{t}) =  (t_k - 0) [- g_k(\mathbf{t}) + g_k(\mathbf{0})] \geq 0.$$
In particular, when $\mathbf{t} = (1,\ldots, 1)$, this shows \Cref{eqn:individual-non-improvement}.

\begin{proof}[Proof of \Cref{lem:P0-function}]
The claim will follow from two claims. The first claim is specific to the game setting of this theorem, while the second claim is a general linear-algebraic statement.
\begin{itemize}
    \item \textbf{Claim 1.} The game Jacobian $J^\mathbf{x}(\mathbf{t})$ of the restricted game $(f_1^\mathbf{x},\ldots, f_N^\mathbf{x})$ is uniformly stable for all $\mathbf{t} \in [0,1]^N$. That is, for all positive-definite diagonal matrix $\Lambda \in \mathbb{R}^{N \times N}$, the matrix $\Lambda^{-1} J^\mathbf{x}$ has purely imaginary eigenvalues, where the $ij$-entry is $$J^\mathbf{x}_{ij}(\mathbf{t}) = \partial_{t_j} g_i(\mathbf{t}).$$
    \item \textbf{Claim 2.} Let $J \in \mathbb{R}^{N \times N}$ be any matrix such that $\mathrm{spec}(\Lambda^{-1} J) \subset i \mathbb{R}$ for all positive-definite diagonal matrices $\Lambda \in \mathbb{R}^{N \times N}$. Then, $J$ and $-J$ are $\mathrm{P}_0$-matrices.
\end{itemize}
These two claims show that the negative game Jacobian $-J^\mathbf{x}(\mathbf{t})$ is a $\mathrm{P}_0$-matrix on all of $\mathbf{t} \in[0,1]^N$. Thus, by Corollary 5.3 of \cite{more1973pfunctions} (\Cref{thm:P0-function}), the function $-G$ is a $\mathrm{P}_0$-function.

\paragraph{Proof of Claim 1}
Fix any $\mathbf{t} \in [0,1]^N$. The restricted game Jacobian $J^\mathbf{x}$ is related to the original Jacobian $\mathbf{J}$ through the equation
\[J_{ij}^\mathbf{x}(\mathbf{t}) = x_i^\top J_{ij}(\mathbf{t} \odot \mathbf{x})x_j,\] 
which follows from applying the chain rule to $\partial_{t_j} g_i$. Because $\mathbf{J}$ is uniformly stable on the open box $U$ containing $\mathbf{x}$,  it is uniformly stable at $\mathbf{t} \odot \mathbf{x}$. In the remainder, we will suppress $\mathbf{t}$ and $\mathbf{x}$ to reduce notation.

Fix any positive-definite diagonal matrix $\Lambda  = \mathrm{diag}(\lambda_1,\ldots, \lambda_N)$. For each $\varepsilon \geq 0$ and $i \in [N]$, define
\[K_i^\varepsilon = \lambda_i^{-1} x_ix_i^\top + \varepsilon I,\]
and define the following block matrix $\mathbf{K}^\varepsilon = \mathrm{diag}(K_1^\varepsilon,\ldots, K_N^\varepsilon)$. Let $\mathbf{K} = \mathbf{K}^0$ for short. Whenever $\varepsilon > 0$, the matrix $(\mathbf{K}^\varepsilon)^{-1}$ is positive definite and we thus have by uniform stability of $\mathbf{J}$ that 
\[\mathrm{spec}\big(\mathbf{K}^\varepsilon\mathbf{J}\big) \subset i \mathbb{R}.\]
The set $i \mathbb{R} \subset \mathbb{C}$ is a closed. As $\mathbf{K}^\varepsilon \to \mathbf{K}$ as $\varepsilon \to 0$, by the continuity of eigenvalues (\Cref{thm:eigenvalue-perturbation}), we have
\[\mathrm{spec}\big(\mathbf{K}\mathbf{J}\big) \subset i \mathbb{R}.\]
For each $i \in [N]$, we lift $x_i \in V_i$ into $\mathbf{V} = V_1 \times \dotsm \times V_N$ and denote the vector by
\[\mathbf{b}_i = 0 \oplus \dotsm \oplus x_i \oplus 0 \dotsm \oplus 0.\]
Because we assumed that $x_i$ is non-zero, neither is $\mathbf{b}_i$. Let $\mathbf{B}$ be any orthogonal basis of $\mathbf{V}$ where the first $N$ columns are $\mathbf{b}_1,\ldots, \mathbf{b}_N$. Under this choice of basis, the following matrix has the block form 
\[\mathbf{B}^{-1}\mathbf{K}\mathbf{J}\mathbf{B} = \begin{pmatrix} \Lambda^{-1} J^\mathbf{x} & * \\ 0 & 0\end{pmatrix}.\]
The upper left block is $\Lambda^{-1} J^\mathbf{x}$ because $\mathbf{b}_i$ and $\mathbf{b}_j$ respectively only detect directions in the $V_i$ and $V_j$ components of $\mathbf{V}$. And so, the only contribution must come from the $ij$-block of $\mathbf{K}\mathbf{J}$, which is 
\[(\mathbf{K}\mathbf{J})_{ij} = \lambda_i^{-1} x_i x_i^\top J_{ij}.\] 
It follows that the $ij$-entry is $\lambda_i^{-1} x_i^\top J_{ij} x_j = (\Lambda^{-1} J^\mathbf{x})_{ij}$. The lower blocks are zero because the rowspace of $\mathbf{K}$ is completely contained in $\mathrm{span}(\mathbf{b}_1,\ldots, \mathbf{b}_N)$.

The matrix $\mathbf{B}^{-1} \mathbf{K}\mathbf{J}\mathbf{B}$ is similar to $\mathbf{K}\mathbf{J}$, so they have precisely the same eigenvalues. Furthermore, $\mathbf{B}^{-1} \mathbf{K}\mathbf{J}\mathbf{B}$ is upper triangular, implying the chain of inclusions
\[\mathrm{spec}(\Lambda^{-1} J^\mathbf{x}) \subset \mathrm{spec}(\mathbf{B}^{-1} \mathbf{K}\mathbf{J}\mathbf{B}) = \mathrm{spec}(\mathbf{KJ}) \subset i \mathbb{R}.\]
As $\Lambda \in \mathbb{R}^{N \times N}$ was an arbitrary positive-definite diagonal matrix, $J^\mathbf{x}$ is uniformly stable.

\paragraph{Proof of Claim 2}
Let $S \subset [N]$ be a subset of indices. Let $M = J[S^c]$ denote the principal submatrix constructed by removing the rows and columns with index in $S$. 

We first show that $\mathrm{spec}(M) \subset i \mathbb{R}$. That is, $M$ has purely imaginary eigenvalues. To do so, define $D_S^\varepsilon$ to be the diagonal matrix where 
\[D_S^\varepsilon(i,i) = \begin{cases}1 & i \notin S\\\varepsilon & i \in S.\end{cases}\]
Then, $D^\varepsilon_S J$ has purely imaginary eigenvalues. Moreover, $D^\varepsilon_S J \to D^0_S J$, by the continuity of eigenvalues (\Cref{thm:eigenvalue-perturbation}), the eigenvalues of $D_S^0 J$ are also purely imaginary. Thus:
\[\mathrm{spec}(M) \subset \mathrm{spec}(D_S^0 J)\subset i\mathbb{R}.\]
The claim that $J$ is $\mathrm{P}_0$ follows with two cases:
\begin{enumerate}
    \item If the dimension of $M$ is odd, then at least one eigenvalue is zero. The determinant is zero.
    \item If the dimension of $M$ is even, then every eigenvalue has a conjugate pair $i\lambda, -i\lambda$ where $\lambda \in \mathbb{R}$. The product is non-negative $\lambda^2 \geq 0$. Thus, the determinant is non-negative.
\end{enumerate}
Since $J[S^c]$ has purely imaginary eigenvalues if and only if $-J[S^c]$ has purely imaginary eigenvalues, the same proof also shows that $-J$ is $\mathrm{P}_0$.
\end{proof}
\end{proof}

\begin{theorem}[Continuity of eigenvalues, Theorem 1 of \cite{elsner1985optimal}] \label{thm:eigenvalue-perturbation}
    Let $A, B \in \CC^{m\times m}$ have spectra $\{\lambda_1,\ldots, \lambda_m\}$ and $\{\mu_1,\ldots, \mu_m\}$, respectively. Then:
    \[\max_{j \in [m]} \min_{i \in [m]}\,\big|\lambda_i - \mu_j\big| \leq \big(\|A\|_F + \|B\|_F\big)^{(m-1)/m} \cdot \big\| A - B\big\|_F^{1/m}.\]
\end{theorem}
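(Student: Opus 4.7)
The plan is to bound the eigenvalue distance via a determinantal inequality obtained from Hadamard's inequality applied to a carefully chosen orthonormal basis. First, I would fix $\mu = \mu_{j^*}$ realizing the outer maximum, let $\delta = \min_i |\lambda_i - \mu|$, and observe that
\[
\delta^m \;\leq\; \prod_{i=1}^m |\mu - \lambda_i| \;=\; |\det(\mu I - A)|.
\]
The theorem then reduces to proving
\[
|\det(\mu I - A)| \;\leq\; \|A - B\|_F \cdot \bigl(\|A\|_F + \|B\|_F\bigr)^{m-1},
\]
since taking $m$-th roots gives exactly the stated bound.

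To bound the determinant, I would take a unit eigenvector $v$ of $B$ with $Bv = \mu v$, extend to an orthonormal basis $v_1 = v, v_2, \ldots, v_m$, let $V$ be the unitary matrix with these columns, and apply Hadamard's inequality to the columns of $(\mu I - A) V$:
\[
|\det(\mu I - A)| \;=\; |\det((\mu I - A) V)| \;\leq\; \prod_{i=1}^m \|(\mu I - A) v_i\|.
\]
The key point is that the first factor collapses onto the perturbation: the eigenvector identity gives $(\mu I - A) v_1 = Bv - Av = (B - A) v$, so this factor is at most $\|B - A\|_F$.

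For the remaining product over $i \geq 2$, I would apply the triangle inequality $\|(\mu I - A) v_i\| \leq |\mu| + \|A v_i\|$, and then combine AM--GM with Cauchy--Schwarz (using unitary invariance $\sum_{i=1}^m \|A v_i\|^2 = \|A\|_F^2$) to bound
\[
\prod_{i=2}^m \bigl(|\mu| + \|A v_i\|\bigr) \;\leq\; \left( |\mu| + \tfrac{1}{m-1}\sum_{i=2}^m \|A v_i\| \right)^{m-1} \;\leq\; \bigl(|\mu| + \|A\|_F\bigr)^{m-1}.
\]
Finally, the spectral-radius bound $|\mu| \leq \|B\|_2 \leq \|B\|_F$ turns this into $(\|A\|_F + \|B\|_F)^{m-1}$, completing the determinantal bound.

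The main obstacle is obtaining the clean factored form $(\|A\|_F + \|B\|_F)^{m-1}$ without spurious constants. This requires three estimates to fit together just so: the AM--GM step that pulls $|\mu|$ out and averages the $\|A v_i\|$ terms, the Cauchy--Schwarz step that converts the sum into a Frobenius norm (absorbing the resulting $\sqrt{m-1}$ via $\|A\|_F/\sqrt{m-1} \leq \|A\|_F$), and the spectral-radius bound that couples $\mu$ to $B$ in the same norm. The rest of the argument — Hadamard's inequality and the reduction to $\delta^m \leq |\det(\mu I - A)|$ — is essentially routine once the basis is chosen so that $v_1$ is the eigenvector of $B$.
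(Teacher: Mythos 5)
The paper does not prove this result; it is stated as a cited theorem from \cite{elsner1985optimal}, so there is no in-paper argument to compare against. Your proof is correct and is, in substance, Elsner's original argument: reduce to bounding $|\det(\mu I - A)|$ via $\delta^m \leq \prod_i|\mu - \lambda_i|$, apply Hadamard's inequality in an orthonormal basis whose first vector is an eigenvector of $B$ for $\mu$ (so that the first column of $(\mu I - A)V$ becomes $(B-A)v$), and control the remaining $m-1$ columns by AM--GM, Cauchy--Schwarz with the unitary invariance $\sum_i\|Av_i\|^2 = \|A\|_F^2$, and the spectral-radius bound $|\mu| \leq \|B\|_F$. All the estimates fit together as you describe, and the $\sqrt{m-1}$ from Cauchy--Schwarz is indeed harmlessly dominated, so the clean $(\|A\|_F + \|B\|_F)^{m-1}$ factor does come out without extraneous constants.
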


\section[Proof of Proposition \ref*{prop:non-convergence-general}]{Proof of \Cref{prop:non-convergence-general}}

\label{sec:proof-non-convergence}

\inapprox*

\begin{proof}
    In the following, let $\Hs(\strats)$ denote the block-diagonal matrix with $H_n(\strats) = \nabla_n^2 h_n(x_n)$ given a set of smooth and strictly convex regularizers $\regs$.
    
    As $\NE$ is not uniformly stable, we can choose $\regs$ such that the eigenvalues of $\Hs^{-1}(\NE) \Jacobian(\NE)$ are not all purely imaginary (such a choice exists by \Cref{lem:choice-regularizer}). In fact, it must have an eigenvalue $\lambda^*$ with positive real part $\Re(\lambda^*) > c > 0$. This is because the matrix $\Hs^{-1}(\NE)\Jacobian(\NE)$ has zero trace---the diagonal blocks of the matrix $H_n(\NE) J_{nn}(\NE)$ in multilinear games is zero---and so, the sum of all eigenvalues must also be zero. 
    
    The existence of eigenvalues with positive real parts, bounded away from zero, extends to an open region around $\NE$ by continuity. In particular, the matrix-valued function $\strats \mapsto \Hs^{-1}(\strats) \Jacobian(\strats)$ is continuous, so by the continuity of eigenvalues (\Cref{thm:eigenvalue-perturbation}), there is an open set $U$ around $\NE$ so that for all $\strats \in U$, each matrix $\Hs^{-1}(\strats) \Jacobian(\strats)$ also has an eigenvalue $\lambda(\strats)$, where its real part is bounded away from zero: 
    \[\Re\big(\lambda(\strats)\big) > c/2 > 0.\]
    We now show that the dynamics do not stabilize to $\NE$. That is, either the sequence of $\beta$-smoothed equilibria $\SE$ does not converge to $\NE$ as $\beta$ goes to zero, or the dynamics become unstable around $\SE$. If the sequence does not converge, then there is nothing to do. So, without loss of generality, we may assume that there exists some sufficiently small $\beta_0 < 2/c$ so that all $\beta$-smoothed equilibria remain in $U$ when $\beta < \beta_0$:
    \[\forall \beta < \beta_0,\qquad \SE \in U.\]
    
    We can now analyze the linear stability of $\SE$ under the $(\betaBRs, \eta)$-averaging dynamics \eqref{eqn:dynamics}. Its Jacobian is:
    \[\nabla\Big((1 - \eta)\, \strats + \eta\, \betaBRs(\strats)\Big) = (1 - \eta)\, \mathbf{I} + \frac{\eta \,\Hs^{-1}(\strats)\Jacobian(\strats)}{\beta},\]
    where $\mathbf{I}$ is the identity matrix, making use of the gradient computation \Cref{lem:gradient-sbr}. The eigenvalues are:
    \[\hspace{3em}(1 - \eta) + \frac{\eta \lambda}{\beta}, \qquad\textrm{where } \lambda \in \mathrm{spec}\big(\Hs^{-1}(\SE)\Jacobian(\SE)\big).\]
    Whenever $\beta < \beta_0$, we have that $\SE \in U$, so that there is at least one $\lambda$ whose real part is at least $c/2$. Moreover, as $\beta_0 < 2/c$, it follows that the modulus is bounded below for all $\eta \in (0,1)$:
    \[\Big|(1 - \eta) + \frac{\eta \lambda}{\beta}\Big| > 1 - \eta + \frac{\eta\, c}{2\beta} > 1.\]
    The fixed point $\SE$ is unstable when $\beta < \beta_0$ as the Jacobian of the dynamics has an eigenvalue whose modulus is greater than 1. This follows from Lyapunov's indirect method \cite[Theorem 3.3]{bof2018lyapunov}.
\end{proof}

\section[Proof of Theorem \ref*{thm:main-convergence}]{Proof of \Cref{thm:main-convergence}}
\label{sec:proof-main-convergence}

\mainconv*

\begin{proof}
We first introduce some notation. Recall the $\beta$-smoothed learning dynamics are given by
\[
\hspace{5em}\strats(t+1) = T_{\beta, \eta} (\strats),\qquad \textrm{where }T_{\beta, \eta} (\strats) = (1-\eta) \strats + \eta \Phi^\beta (\strats).
\]
Also recall from \Cref{lem:gradient-sbr} that
\[\nabla \betaBRs(\strats) = \frac{1}{\beta} \Hs(\strats)^{-1} \Jacobian(\strats).\] 
Define the linear operators $A_\beta$ and $B_\beta$ as follows:
\[
A_\beta = \Hs(\SE)^{-1} \Jacobian(\SE) \qquad \textrm{and} \qquad B_\beta = DT(\SE) = (1 - \eta) \mathbf{I} + \frac{\eta}{\beta} A_\beta,
\]
so that $B_\beta$ linearizes $T_{\beta,\eta}$ about the fixed point $\SE$. 

By uniform stability, the spectrum of $A_\beta$ is purely imaginary. Let $W_\beta = \rho(A_\beta)$ be the spectral radius, and choose step size $\eta = \smash{\frac{1}{2}\frac{\beta^2}{\beta^2+W_\beta^2}}$. The eigenvalues of $B_\beta$ are of the form: $1-\eta+\frac{\eta}{\beta} i \omega$.
Plugging in this choice of $\eta$, we see that the spectral radius $\rho(B_\beta) \leq e^{-3\eta/4}$ is bounded:
\begin{align*}
     \rho(B_\beta) &\leq \left|1 - \eta + i \frac{\eta\omega}{\beta}\right|
    \\&= \sqrt{1 - 2\eta + \eta^2\left(1 + \frac{\omega^2}{\beta^2}\right)}
    \\&\leq \exp\left(- \eta + \frac{1}{2}\eta^2 \left(1 + \frac{\omega^2}{\beta^2}\right)\right)
    \\&\leq \exp\left(- \frac{3\eta}{4}\right).
\end{align*}

\paragraph{Construction of a Lyapunov function}
Fix the constant $q=e^{-5\eta/8}$, so that $\rho(B_\beta)<q<e^{-\eta/2}$. Since we have the bound $\rho(q^{-1} B_\beta) < 1$, the following is a convergent series:
\begin{equation}\label{eqn:P-beta-def}
P_\beta = \sum_{k=0}^\infty q^{-2k} (B_\beta^k)^\top B_\beta^k.
\end{equation}
The operator $P_\beta \succ 0$ is positive definite, since it is a series of positive-definite operators: $B_\beta$ is full-rank since all of its eigenvalues have real part $1 - \eta$.
The operator $P_\beta$ also solves the discrete Lyapunov equation:
\[
B_\beta^\top P_\beta B_\beta - q^2 P_\beta + q^2 \mathbf{I} = 0,
\]
which can be directly verified by a little bit of algebra.
Define the Lyapunov function:
\[\|\strats - \strats^\beta\|_{P_\beta}^2 = (\strats - \strats^\beta)^\top P_\beta (\strats - \strats^\beta),\]
and let $\mathbb{B}_{P_\beta}(\strats, r)$ denote the $r$-ball with respect to the $\|\cdot\|_{P_\beta}$-norm around $\strats$. Let $\|\cdot\|_\mathrm{op}$ denote the operator norm induced by $\|\cdot\|_{P_\beta}$ on the space of linear endomorphisms. In particular, we have that
\begin{equation}\label{eqn:Bbeta-bound}
    \|B_\beta\|_{\mathrm{op}}^2 = \sup_{\|\mathbf{u}\|_{P_\beta} = 1} \,\mathbf{u}^\top B_\beta^\top P_\beta B_\beta \mathbf{u} =  \sup_{\|\mathbf{u}\|_{P_\beta} = 1} \, q^2\mathbf{u}^\top P_\beta \mathbf{u}- q^2 \mathbf{u}^\top \mathbf{u} < q^2,
\end{equation}
where the second inequality uses the discrete Lyapunov equation.

\paragraph{Perturbation bounds} By choosing a sufficiently small radius $R_\beta > 0$, the closed ball $\overline{\mathbb{B}}_{P_\beta}(x^\beta, R_\beta)$ is contained in the domain. The map $\strats \mapsto \mathbf{H}(\strats)^{-1} \mathbf{J}(\strats)$ is smooth (\Cref{lem:gradient-sbr}), so it is $L$-Lipschitz for some constant $L \equiv L_{A,\beta,R_\beta}$. Define the smaller radius $0 < r_\beta < R_\beta$, which scales linearly with $\beta$, as follows:
\[
r_\beta \leq \min\left\{R_\beta, \frac{\beta}{\eta L} (e^{-\eta/2} - q) \right\}.
\]
We obtain the bound:
\begin{align}
\hspace{4em} \|DT(\strats) - B_\beta\|_\mathrm{op} \leq e^{-\eta/2} - q, \qquad\forall \strats \in \mathbb{B}_{P_\beta}(x^\beta, r_\beta).
\label{eq:DT_bound}
\end{align}
This is verified by the following computation:
\begin{align*}
    \|DT(\strats) - B_\beta\|_\mathrm{op} &= \frac{\eta}{\beta} \left\|\Hs(\strats)^{-1} \Jacobian(\strats) - \Hs(\SE)^{-1} \Jacobian(\SE)\right\|_\mathrm{op}
    \\&\leq \frac{\eta}{\beta} L \|\strats - \SE\|_{P_\beta} 
    \\&\leq \frac{ \eta L r_\beta}{\beta}.
\end{align*}
By the triangle inequality, \eqref{eqn:Bbeta-bound} and \eqref{eq:DT_bound} implies the upper bound
\begin{equation}\label{eqn:DT-contract}
\hspace{4em}\|DT(\strats)\|_{P_\beta} \leq e^{-\eta/2}, \qquad \forall \strats\in \mathbb{B}_{P_\beta}(x^\beta, r_\beta).
\end{equation}

\paragraph{Contraction of dynamics}
We now use \eqref{eqn:DT-contract} to show that the dynamics $\strats \mapsto T_{\beta,\eta}(\strats) - \SE$ is a contraction with respect to $\|\cdot \|_{P_\beta}$. Fix any $\strats\in \mathbb{B}_{P_\beta}(\strats^\beta, r_\beta)$. Since $\mathbb{B}_{P_\beta}(\strats^\beta, r_\beta)$ is convex, the entire line segment
\[
\strats^\beta+s(\strats-\strats^\beta),
\qquad s\in[0,1],
\]
is contained in $\mathbb{B}_{P_\beta}(\strats^\beta, r_\beta)$. Applying the integral form of the mean-value theorem, we obtain:
\[
T_{\beta,\eta}(\strats)-T_{\beta,\eta}(\strats^\beta)
=
\int_0^1
DT_{\beta,\eta}
\left(
\strats^\beta+s(\strats-\strats^\beta)
\right)
(\strats-\strats^\beta)\,ds.
\]
Since $\strats^\beta$ is a fixed point of $T_{\beta,\eta}$, we have
\[
\begin{aligned}
\|T_{\beta,\eta}(\strats)-\strats^\beta\|_{P_{\beta}}
&=
\|T_{\beta,\eta}(\strats)-T_{\beta,\eta}(\strats^\beta)\|_{P_{\beta}} \\
&\le
\int_0^1
\left\|
DT_{\beta,\eta}
\left(
\strats^\beta+s(\strats-\strats^\beta)
\right)
\right\|_{P_{\beta}}
\,ds\,
\|\strats-\strats^\beta\|_{P_{\beta}} \\
&\le
e^{-\eta/2}
\|\strats-\strats^\beta\|_{P_{\beta}}.
\end{aligned}
\]
Consequently, the map $T_{\beta,\eta}$ is a contraction on
$\mathbb{B}_{P_\beta}(\strats^\beta, r_\beta)$ with contraction factor
$e^{-\eta/2}$. 

Let $\strats(0)\in \mathbb{B}_{P_\beta}(\strats^\beta, r_\beta)$. Repeated applications of the above inequality yields
\[
\|\strats(t)-\strats^\beta\|_{P_{\beta}}
\le
e^{-\eta t/2}
\|\strats(0)-\strats^\beta\|_{P_{\beta}}.
\]
Let $\kappa(P_\beta)$ be the condition number $\lambda_\mathrm{max}(P_\beta)/\lambda_\mathrm{min}(P_\beta)$. Then, the above inequality may be converted into a contraction with respect to the Euclidean distance:
\[
\|\strats(t)-\strats^\beta\|_2
\le
\kappa(P_\beta) e^{-\eta t/2}
\|\strats(0)-\strats^\beta\|_2.
\]
The result follows from applying the following lemma bounding $\kappa(P_\beta)$.
\begin{lemma}[Condition number bound]\label{lem:condition-number}
    The condition number is bounded by
    \[\kappa(P_\beta) \leq (4d)! \sum_{\ell=0}^{d-1} \left(1 + \frac{\eta K}{\beta}\right)^{2\ell} \frac{e^{5 \eta \ell / 4}}{(1 - e^{-\eta / 4})^{2\ell + 1}},\]
    where $K = \|\mathbf{H}^{-1}(\SE) \mathbf{J}(\SE)\|_\mathrm{op}$. In particular, if $\eta = \Theta(\beta^2)$ and $\beta$ is sufficiently small,
    \[ \log \kappa(P_\beta) = O\left(d \log d + d \log \frac{1}{\beta}\right).\]
\end{lemma}
\begin{proof}[Proof of \Cref{lem:condition-number}]
    For convenience, recall
    \[P_\beta = \sum_{k=0}^\infty q^{-2k} (B_\beta^k)^\top B_\beta^k.\tag{\ref{eqn:P-beta-def}}\]
    Since $P_\beta$ is a infinite series of positive-definite matrices, where the $k = 0$ term in the series is the identity matrix $I$. It follows that $\lambda_\mathrm{min}(P_\beta) \geq 1$, and that $\kappa(P_\beta) \leq \lambda_\mathrm{max}(P_\beta)$.

    We now compute a crude upper bound on $\lambda_\mathrm{max}(P_\beta)$. Let $B_\beta = Q \Gamma Q^*$ be the Schur decomposition of $B_\beta$, where $Q$ is unitary and $\Gamma$ is upper triangular. In particular, let $\Gamma = \Lambda + N$ where $\Lambda$ is diagonal matrix consisting of the eigenvalues of $B_\beta$ and $N$ is a strictly upper triangular (hence nilpotent) matrix. Then:
    \[\|B_\beta^k\|_\mathrm{op} \leq \sum_{\ell=0}^{\min\{k, d-1\}} \binom{k}{\ell} \|\Lambda\|_\mathrm{op}^{k-\ell} \cdot \|N\|_\mathrm{op}^\ell,\]
    as shown in Lemma 2.1 of \cite{davies2003schur}. Let $M = 1 + \frac{\eta}{\beta} K$ so that $\|\Lambda\|_\mathrm{op} = \rho(B_\beta) \leq M$ and
    \[\|N\|_\mathrm{op} = \|Q(\Lambda + N)Q^* - Q \Lambda Q^*\|\leq  \|B_\beta\|_\mathrm{op} + \|\Lambda\|_\mathrm{op} \leq 2M,\]
    we obtain the bound for all $k \geq 0$ that
    \[\|B_\beta^k\|_\mathrm{op} \leq \sum_{\ell=0}^{d-1} \binom{k}{\ell} (2M)^\ell \rho(B_\beta)^{k-\ell}.\]
    where $\binom{k}{\ell} = 0$ whenever $\ell > k$. Let $\rho \equiv \rho(B_\beta)$ for short. Cauchy-Schwarz gives the bound
    \[\|B_\beta^k\|_\mathrm{op}^2 \leq d \sum_{\ell=0}^{d-1} \binom{k}{\ell}^2 (2M)^{2\ell} \rho^{2(k-\ell)}.\]
    This implies
    \begin{align*}
        \|P_\beta\|_\mathrm{op} &\leq \sum_{k=0}^\infty q^{-2k} \|B_\beta^k\|_\mathrm{op}^{2}
        \\&\leq d \sum_{k=0}^\infty \sum_{\ell=0}^{d-1} \left(\frac{2M}{q}\right)^{2\ell} \binom{k}{\ell}^2 \left(\frac{\rho}{q}\right)^{2(k - \ell)}
        \\&\leq d\sum_{\ell = 0}^{d-1} \left(\frac{2M}{q}\right)^{2\ell}\sum_{n=0}^\infty \binom{n+\ell}{\ell}^2 \left(\frac{\rho}{q}\right)^{2n}.
    \end{align*}
    Recall that $\rho \leq e^{-3 \eta /4}$ and $q = e^{-5\eta / 8}$, so that $z = (\rho/q)^2 < e^{-\eta / 8}$. Let $A_m$ denote the $m$th Eulerian polynomial \citep[Equation 1.36]{stanley2011enumerative}. The inner summation can be bounded as follows:
    \begin{align*}
        \sum_{n=0}^\infty \binom{n + \ell}{\ell}^2 z^n &= \sum_{n=0}^\infty  \prod_{j = 1}^\ell \left(1 + \frac{n}{j}\right)^2 z^n
        \\&\leq \sum_{n=0}^\infty (1 + n)^{2\ell} z^n
        \\&= \frac{1}{z}\sum_{m=0}^\infty m^{2\ell} z^m 
        \\&\overset{(i)}{=} \frac{1}{z}\frac{A_{2\ell}(z)}{(1 - z)^{2\ell + 1}}
        \\&\overset{(ii)}{\leq} \frac{(2\ell)!}{(1 - z)^{2\ell + 1}}
    \end{align*}
    where (i) applies Proposition 1.4.4 of \cite{stanley2011enumerative}, and (ii) uses the fact that $A_{2\ell }(z)/z \leq A_{2\ell}(1) = (2\ell)!$ which is explained around Equation 1.36 of \cite{stanley2011enumerative}. Returning to bounding $\|P_\beta\|_\mathrm{op}$, we obtain
    \begin{align*}
        \|P_\beta\|_\mathrm{op} \leq (4d)! \sum_{\ell=0}^{d-1} \left(1 + \frac{\eta K}{\beta}\right)^{2\ell} \frac{e^{5 \eta \ell / 4}}{(1 - e^{-\eta / 4})^{2\ell + 1}}.
    \end{align*}
\end{proof}
\end{proof}

\section{Beyond interior Nash equilibria} \label{sec:beyond-interior}

The convergence result \Cref{thm:main-convergence} considered only interior Nash equilibria. This section extends this result to Nash equilibria when they are on the boundary of the simplex.

The following lemma shows that if $\NE$ is an isolated, non-interior equilibrium, then the game can be locally \emph{reduced} by removing actions that are not in the support of $\NE$. These actions turn out to be strictly dominated, and so the game-theoretic equilibrium $\NE$ is preserved after removing dominated strategies.

\begin{lemma} \label{lem:local-domination}
    Let $(\domains, \utilities)$ be an $N$-player normal-form game with a quasi-strong equilibrium $\NE \in \domains$. There exists a neighborhood $U$ containing $\NE$ such that actions not supported by $\NE$ are strictly dominated. That is, there exists some $\epsilon > 0$ such that for each player $n \in [N]$ and for all $\strats \in U$ and $i \notin \mathrm{supp}(x_n^*)$,
    \[f_n(e_{n,i}; \strats_{-n}) < \max_{j \in [k_n]}\, f_n(e_{n,j}; \strats_{-n}) - \epsilon.\]
\end{lemma}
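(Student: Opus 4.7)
The plan is to prove this by a standard continuity/compactness argument, exploiting the fact that the quasi-strict property gives a strict gap at the equilibrium itself, and then using continuity of the multilinear utilities to propagate this gap to a neighborhood. Since there are only finitely many players and finitely many pure strategies per player, the gap can be taken to be uniform.

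First, I would unpack the quasi-strict assumption. By definition, for every player $n$ and every $i \notin \mathrm{supp}(x_n^*)$, the pure strategy $e_{n,i}$ is not a best response against $\NE_{-n}$, so
\[
\delta_{n,i} \;:=\; \max_{j \in [k_n]} f_n(e_{n,j}; \NE_{-n}) - f_n(e_{n,i}; \NE_{-n}) \;>\; 0.
\]
Since the index set $\{(n,i) : n \in [N],\, i \notin \mathrm{supp}(x_n^*)\}$ is finite, the quantity $2\epsilon := \min_{n,i} \delta_{n,i}$ is strictly positive. This yields a uniform gap at the equilibrium point itself.

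Next I would extend this gap to a neighborhood by continuity. For each fixed player $n$ and pure action $i$, the map $\strats_{-n} \mapsto f_n(e_{n,i}; \strats_{-n})$ is multilinear, hence continuous, and the map $\strats_{-n} \mapsto \max_{j \in [k_n]} f_n(e_{n,j}; \strats_{-n})$ is continuous as a pointwise maximum of finitely many continuous functions. Consequently, the function $\strats \mapsto \max_{j \in [k_n]} f_n(e_{n,j}; \strats_{-n}) - f_n(e_{n,i}; \strats_{-n})$ is continuous at $\NE$ and equals $\delta_{n,i} \geq 2\epsilon$ there. Pick a neighborhood $U_{n,i}$ of $\NE$ on which this function exceeds $\epsilon$. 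Let $U = \bigcap_{n,i} U_{n,i}$, intersected over the finitely many pairs $(n,i)$ with $i \notin \mathrm{supp}(x_n^*)$. Then $U$ is an open neighborhood of $\NE$ on which the desired strict domination with gap $\epsilon$ holds simultaneously for all such $(n,i)$.

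There is essentially no technical obstacle; the only thing to be careful about is that the required gap is taken \emph{before} passing to the neighborhood, so that the minimum over $(n,i)$ is well-defined (we use finiteness of the pure strategy sets), and that one must check the dependence only on $\strats_{-n}$ rather than $\strats$ itself is harmless, which it is because restricting to a neighborhood of $\NE$ in $\domains$ restricts $\strats_{-n}$ to a neighborhood of $\NE_{-n}$. This establishes the claimed local domination and prepares the ground for reducing the game to the supports of $\NE$ in the subsequent arguments.
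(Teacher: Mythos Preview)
Your proposal is correct and follows essentially the same argument as the paper: use quasi-strictness to obtain a strictly positive gap at $\NE$ for each unsupported pure strategy, take the minimum over the finitely many such pairs, and then use continuity of the (multilinear) utilities to propagate a halved gap to a neighborhood given as a finite intersection of open sets. The only cosmetic difference is that the paper first minimizes over $i$ for each player to get $\epsilon_n$ and then over $n$, whereas you minimize over all pairs $(n,i)$ at once; the two are equivalent.
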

\begin{proof}
    As $\NE$ is quasi-strong, for each player $n \in [N]$, there is some $\epsilon_n > 0$ such that whenever a pure strategy $i \in [k_n]$ is not supported, meaning that $i \notin \mathrm{supp}(\NE)$, then it is strictly dominated:
    \[f_n(e_{n,i}; \NE_{-n}) < \max_{j \in [k_n]} f_n(e_{n,j}; \NE_{-n}) - \epsilon_n.\]
    This is because the set of pure strategies that are best-responses to $\NE$ are precisely those in $\mathrm{supp}(x_n^*)$. As this is a finite set of alternatives, there must be some positive gap between those that maximize utility and those that do not. And as $f_n$ is continuous, for a sufficiently small ball $B_n$ around $\NE$, the same domination condition remains true, although perhaps with a smaller gap. That is, when $i \notin \mathrm{supp}(\NE)$,
    \[\forall \strats \in B_n,\qquad f_n(e_{n,i}; x_{-i}) < \max_{j \in[k_n]}\, f_n(e_{n,j}; \strats_{-n}) - \epsilon_n/2.\hspace{30pt}\]
    The result follows by letting $U$ be the intersection $B_1 \cap \dotsm \cap B_N$ and $\epsilon = \min_{n \in [N]}\, \epsilon_n/2$.
\end{proof}

\begin{corollary}[Reduction to interior equilibria]
    \label{cor:reduction-eq}
Let $\NE$ be a quasi-strong equilibrium of a normal-form game. Then, it is an interior Nash equilibrium of the reduced game at $\NE$.
\end{corollary}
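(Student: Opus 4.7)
The plan is to verify the two defining properties of an interior Nash equilibrium of the reduced game $(\domains(\NE), \utilities|_{\domains(\NE)})$: (i) that $\NE$ lies in the interior of $\domains(\NE)$, where interior is taken relative to the affine hull of the reduced simplex, and (ii) that $\NE$ remains a Nash equilibrium when each player's deviation set is restricted from $\Omega_n$ to $\Omega_n(x_n^*)$. Both properties should follow almost directly from the respective halves of the quasi-strict definition, so the main work is just unpacking the relevant definitions.

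For (i), I would fix a player $n \in [N]$ and observe that $\Omega_n(x_n^*)$, being the set of mixed strategies supported on $\mathrm{supp}(x_n^*)$, is a $(k_n' - 1)$-dimensional simplex whose relative interior consists exactly of strategies placing strictly positive probability on every alternative in $\mathrm{supp}(x_n^*)$. By the very definition of $\mathrm{supp}(x_n^*)$, the component $x_{n,i}^*$ is positive for each $i \in \mathrm{supp}(x_n^*)$, so $x_n^*$ lies in this relative interior. Taking the product over $n$ gives $\NE \in \mathrm{int}(\domains(\NE))$.

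For (ii), note that $\Omega_n(x_n^*) \subseteq \Omega_n$ for each player, hence $\domains(\NE) \subseteq \domains$. Since $\NE$ was assumed to be a Nash equilibrium of the original game, for every $n$ and every $x_n' \in \Omega_n$ we have $f_n(x_n'; \NE_{-n}) \leq f_n(\NE)$; restricting the quantifier to $x_n' \in \Omega_n(x_n^*)$ preserves the inequality, which is precisely the Nash condition for the reduced game. (The quasi-strict hypothesis is not needed in this second step, only for placing $\NE$ in the interior.)

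I do not anticipate any real obstacle: both verifications are immediate once one writes down what the relative interior of the reduced simplex is and notes that the reduced deviation set is contained in the original. The only subtlety worth flagging in the write-up is that ``interior'' in the statement refers to the relative interior of $\domains(\NE)$ as a product of lower-dimensional simplices, matching the convention used in \Cref{eqn:strategy-restriction}; once this convention is made explicit, the argument is essentially a one-line unpacking.
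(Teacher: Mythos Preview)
Your proposal is correct and takes essentially the same approach as the paper: both verify interiority by observing that $x_{n,i}^* > 0$ for every $i \in \mathrm{supp}(x_n^*)$, and the paper leaves your step (ii) implicit. One small correction to your parenthetical: the quasi-strict hypothesis is not actually used in step (i) either, since interiority in $\Omega_n(x_n^*)$ follows purely from the definition of support; the corollary carries the hypothesis only because it is the ambient assumption of the section, not because the argument needs it.
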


\begin{proof}
    For each player $n \in [N]$ and alternative $i \in \mathrm{supp}(x_n^*)$, the probability mass on $i$ is bounded away from zero since it is positive $x_{n,i}^* > 0$. Then, either it is the unique action that is supported, in which we case we say that $x_n^*$ is trivially contained in the interior of the reduced simplex $\Omega_n(x_n^*)$. Otherwise, it must be bounded away from one $x_{n,i}^* < 1$, since there is at least one other supported action. Either way: 
    \[x_n^* \in \mathrm{int}\big(\Omega_n(x_n^*)\big).\]
    Thus, $\NE$ is an interior equilibrium of the reduced game.
\end{proof}

\subsection{Calculus on the probability simplex}
\label{sec:calculus}
This section formalizes smoothness on the boundary of the simplex. Let $\Delta \equiv \Delta^{k-1}$ be the $(k-1)$-dimensional probability simplex. We view it as the union of open manifolds of dimensions ranging from 0 through $k-1$. In particular, we decompose the simplex as follows:
\[\Delta = \bigcup_{S \subset [k]} \Delta_S, \qquad \textrm{where } \Delta_S := \big\{ x \in \Delta^{k-1} : \mathrm{supp}(x) = S\big\},\]
so that $\Delta_S$ is the interior of a probability simplex with $|S|-1$ dimensions. For each $S \subset [k]$, we say that the closure of $\Delta_S$, denoted by $\overline{\Delta}_S$, is a \emph{face} of the simplex. To take the gradient and Hessian of map $h : \Delta \to \RR$ at a point $x \in \Delta$ with $\mathrm{supp}(x) = S$, we simply take them with respect to the simplex $\Delta_S$.

\begin{definition}[Calculus on the simplex] \label{def:simplex-calculus}
    For each $S \subset [k]$, let $T\Delta_S$ denote the tangent space of $\Delta_S$, which we may naturally identify with the following subspace of $\RR^k$:
    \[T\Delta_S := \big\{v \in \RR^k : \mathrm{supp}(v) = S \textrm{ and } \ind^\top v = 0\big\}.\]
    When a map $h_S : \Delta_S \to \RR$ is smooth, let $\nabla_S h_S : \Delta_S \to T\Delta_S$ denote its gradient. Let $h : \Delta \to \RR$ be a map whose restriction onto $\Delta_S$ is smooth for all $S \subset [k]$. Its \emph{gradient} and \emph{Hessian} at $x$ are defined as:
    \[\nabla h(x) := \nabla_S h(x) \qquad \textrm{and}\qquad \nabla^2 h(x) := \nabla_S^2 h(x),\]
    where $S = \mathrm{supp}(x)$. Thus, $\nabla h(x) \in \RR^k$ and $\nabla^2 h(x) \in \RR^{k \times k}$.
\end{definition}

\begin{definition}[Smoothness on the simplex]
    A map $g : \Delta \to \RR$ is \emph{smooth} on the simplex if it can be extended to a smooth map on an open set $U \subset \RR^k$ containing $\Delta$.
\end{definition}

Any utility $f : \domains \to \RR$ in a normal-form game is smooth on the simplex, since it immediately extends to a multilinear map $p : \RR^{k_1} \times \dotsm \times \RR^{k_N} \to \RR$, where $p$ is formally the same polynomial as $f$. While their function values coincide on $\domains$, they are not the same function because their domains are distinct. In particular, their derivatives are not generally equal, as they map to distinct tangent spaces. Nevertheless, we will use the representation inherited from the ambient space for each $\Delta^{k_n - 1} \subset \RR^{k_n}$ specified in \Cref{def:simplex-calculus}.

\subsection{The rate of suboptimal plays in smoothed-best responses}

\begin{example}[Entropic regularizer is linearly steep]
    \label{ex:entropy-linear-steep}
    Let $h(x) = \sum x_i \log x_i$ on $\Omega = \Delta^{k-1}$. For any $v \in T\Omega$, the solution $x^\beta(v) := \nabla h^{-1}(v/\beta)$ is given by:
    \[x^\beta(v)_i = \frac{\exp(v_i/\beta)}{\sum_{j=1}^{k} \exp(v_j/\beta)} \leq \exp\left(- \frac{1}{\beta} \cdot \Big(\max_{j \in [k]} v_j - v_i\Big)\right).\]
    It follows that the entropic map is linearly steep since $\exp(-\epsilon/\beta)$ decays to zero much faster than $\beta$.
\end{example}

\dominatedrate*

\begin{proof}[Proof of \Cref{lem:steep-convergence}]
    By \Cref{lem:local-domination}, there exists an open region $U$ around $\NE$ and some $\epsilon > 0$ such that for each player $n \in [N]$ and for all $\strats \in U$ and $i \notin \mathrm{supp}(\NE)$, the action $i$ is $\epsilon$-suboptimal. In particular, $\nabla_n f_n(\strats) \in T\Omega_n$ denote the gradient of $f_n$ on the manifold $\Omega_n$. We obtain that:
    \[\nabla_n f_n(\strats)_i < \max_{j \in [k_n]}\, \nabla_n f_n(\strats)_j - \epsilon.\]
    Since $(\SE)_\beta$ converges to $\NE$, for sufficiently small $\beta$, the $\beta$-smoothed equilibrium $\SE$ eventually remains in $U$ as $\beta$ goes to zero. In particular, $\nabla f_n(\SE)$ is contained in the set $V_i(\epsilon)$ where the $i$th alternative is $\epsilon$-suboptimal, defined in \Cref{def:linear-steepness}. As the regularizer $h_n$ is linearly steep, the $\beta$-smoothed equilibrium places sublinearly little mass on any action $i \notin \mathrm{supp}(x_n^*)$: 
    \[\lim_{\beta \to 0} \, \frac{x^\beta_{n,i}}{\beta} = 0.\]
    Thus, the projection of the smoothed equilibrium $\SE$ onto $\domains(\NE)$ satisfies:
    \begin{align*} 
        \frac{\big\|\SE - \Pi \SE \big\|}{\beta} \leq \sum_{n \in [N]} \sum_{i \notin \mathrm{supp}(x_n^*)} \frac{2x^\beta_{n,i}}{\beta},
    \end{align*}
    by the triangle inequality. Taking the limit as $\beta$ goes to zero yields the result.
\end{proof}

\begin{lemma}[Hessians of linearly steep regularizers] \label{lem:choice-regularizer}
    Let $\Delta = \Delta^{k-1}$ be the $(k-1)$-dimensional simplex. Let $H \in T\Delta \otimes T\Delta$ be positive definite and $x \in \mathrm{int}(\Delta)$. There is a linearly steep regularizer $h : \Delta \to \RR$ where: 
    \[\nabla h(x) = 0 \qquad\textrm{and}\qquad \nabla^2 h(x) = H.\]
\end{lemma}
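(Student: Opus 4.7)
The plan is to construct $h$ as a small perturbation of the negative entropy $e(y) = \sum_i y_i \log y_i$, which is the canonical linearly steep regularizer on the simplex (Example). Since $e$ has an explicit gradient $\nabla e(y) = (\log y_i + 1)_i$ and Hessian $\nabla^2 e(y) = \mathrm{diag}(1/y_i)$, it is easy to tune a quadratic correction to hit any prescribed Hessian at $x$, and to tune a linear correction to kill the gradient at $x$. Concretely, I would set
\[
h(y) \;=\; \delta\, e(y) \;+\; \tfrac{1}{2}(y-x)^\top Q (y-x) \;+\; c^\top y,
\]
where $\delta>0$ is a small scalar, $Q\in\RR^{k\times k}$ is symmetric, and $c\in\RR^k$ is a vector. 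To match the Hessian, choose $\delta>0$ small enough that $H-\delta\,\mathrm{diag}(1/x_i)|_{T\Delta}$ remains positive definite on $T\Delta$; this is possible since $H \succ 0$ on $T\Delta$ and $\mathrm{diag}(1/x_i)|_{T\Delta}$ is a fixed positive definite operator. Then take any symmetric extension $Q\in\RR^{k\times k}$ whose restriction to $T\Delta$ equals $H-\delta\,\mathrm{diag}(1/x_i)|_{T\Delta}$ (for instance, extend by zero on $\mathrm{span}(\mathbf{1})$). To match the gradient, set $c = -\delta\,(\log x_i + 1)_i$, so that $\nabla h(x) = \delta\nabla e(x) + c = 0$ in the ambient space, hence in $T\Delta$.

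The next step is to check that $h$ is a valid (steep, strictly convex, smooth) regularizer on $\Delta$. Smoothness on each open face $\Delta_S$ is clear. For any $y \in \mathrm{int}(\Delta_S)$, the Hessian of $h$ restricted to $T\Delta_S$ is $\delta\,\mathrm{diag}(1/y_i)|_{T\Delta_S} + Q|_{T\Delta_S}$; since $T\Delta_S \subset T\Delta$ and $Q|_{T\Delta}\succ 0$ by construction, we get $Q|_{T\Delta_S}\succ 0$, and the sum is positive definite. Steepness of $h$ at the relative boundary of a face follows because $\|\nabla e(y)\| \to \infty$ as $y$ approaches the boundary along that face, while $\nabla(\tfrac12(y-x)^\top Q(y-x) + c^\top y)$ stays bounded on the compact simplex.

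The main verification, and the place where care is needed, is linear steepness. Fix $i\in[k]$, $\epsilon>0$, and $v\in V_i(\epsilon)$, and write $p := x^\beta(v)_i$. Let $j^* \in \argmax_j v_j$ and consider the feasible competitor $x' := x^\beta(v) - p\, e_i + p\, e_{j^*}$, obtained by shifting mass $p$ from coordinate $i$ to $j^*$. The optimality of $x^\beta(v)$ gives
\[
p(v_{j^*}-v_i) \;\leq\; \beta\bigl(h(x') - h(x^\beta(v))\bigr).
\]
The entropy contribution evaluates, by direct expansion, to $\delta\bigl(-p\log p + O(p)\bigr)$; the quadratic and linear contributions are bounded by $O(p)$ since their gradients are bounded on the compact $\Delta$. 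Using $v_{j^*}-v_i \geq \epsilon$, rearranging yields $\log p \leq -\epsilon/(\beta\delta) + O(1)$, so $p = O\bigl(\exp(-\epsilon/(\beta\delta))\bigr)$, which goes to zero much faster than $\beta$. This is exactly the linear steepness condition of Definition.

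The main obstacle is this last step: one has to track how the quadratic and linear perturbations interact with the entropic barrier in the optimality comparison, and confirm that they only contribute $O(p)$ and therefore cannot defeat the exponential decay forced by entropy. Everything else amounts to routine algebra (computing derivatives of $e$) and an easy tuning of $\delta$ to maintain positive definiteness; the existence of a suitable symmetric extension $Q$ from $T\Delta$ to $\RR^{k\times k}$ is immediate by projecting and extending by zero on $\mathrm{span}(\mathbf{1})$.
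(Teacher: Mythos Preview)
Your approach is essentially the same as the paper's: both build $h$ as a small multiple of the negative entropy plus a quadratic correction, choosing the scalar small enough that the prescribed $H$ minus the entropy Hessian remains positive definite on $T\Delta$, and then tuning the quadratic (and a shift) to hit the target gradient and Hessian at $x$. The only difference is cosmetic parametrization---the paper writes the quadratic as $\tfrac12\|A(y-w)\|^2$ with $A$ invertible and adds $\mathbf{1}\mathbf{1}^\top$ to make the ambient matrix positive definite, while you extend $Q$ by zero on $\mathrm{span}(\mathbf{1})$. One point where your argument actually goes further: the paper's proof only verifies that $h$ is \emph{steep}, not \emph{linearly} steep, whereas your mass-shifting comparison (moving weight $p$ from the $\epsilon$-suboptimal coordinate to the best one and invoking optimality of $x^\beta(v)$) explicitly yields the $\exp(-\Omega(1/\beta))$ decay that linear steepness requires.
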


\begin{proof}
    Define $\mathrm{Sym}^+(T\Delta)$ to be the set of positive-definite matrices in $T\Delta \otimes T\Delta$. Fix $x \in \mathrm{int}(\Omega_i)$ and consider the family of steep regularizers:
    \[h(x;\lambda, A, w) = \lambda \sum_{i=1}^{k} x_i \log x_i + \frac{1}{2} \big\|A (x - w)\big\|_2^2,\]
    where $\lambda > 0$, $A \in \RR^{k \times k}$ is invertible, and $w \in \RR^{m_i}$. We show that $h$ is steep and that for this fixed $x$, 
    \[\mathrm{Sym}^+(T\Delta) = \Big\{\nabla^2 h(x;\lambda, A) : \lambda > 0 \textrm{ and } A \textrm{ invertible}\Big\}.\]
    
    Under the coordinate representation given by \Cref{def:simplex-calculus}, the gradient on the simplex $\Delta$ is given by:
    \[\nabla h(x;\lambda, A) = \left(I - \frac{1}{k} \ind \ind^\top\right)\Big[\lambda \log x + A^\top A(x-w)\Big],\]
    where $\log x$ is applied component-wise. Thus, $h$ is steep, since $\log x_i \to - \infty$ as $x_i \to 0$. There exists $w \in \RR^{k}$ such that $\nabla h(x_i) = 0$, by selecting:
    \[w = x_i - \lambda (A^\top A)^{-1} \log x_i.\]
    
    The Hessian on the simplex is given by:
    \[\nabla^2 h(x;\lambda, A) = \left(I - \frac{1}{k} \ind \ind^\top\right)\Big[\lambda \cdot \mathrm{diag}(1/x)  + A^\top A\Big]\left(I - \frac{1}{k} \ind \ind^\top\right),\]
    where $\mathrm{diag}(1/x)$ is the diagonal matrix with diagonal entries $1/x_i$. For every $M \in \mathrm{Sym}^+(T\Delta)$, there is a sufficiently small $\lambda > 0$ such that the following matrix is positive-definite, so there is an invertible matrix $A$:
    \[A^\top A = M + \ind \ind^\top - \lambda \cdot \mathrm{diag}(1/x).\]
    It follows that $\nabla^2 h(x;\lambda, A) = M$, showing that $\mathrm{Sym}^+(T\Delta) \subset \{\nabla_{\Omega_i}^2 h_i(z;\lambda, A): \lambda > 0 \textrm{ and } A \textrm{ invertible}\}$. The reverse inclusion holds as $h$ is strongly convex, so its Hessian on the simplex is contained in $\mathrm{Sym}^+(T\Delta)$.
\end{proof}

\subsection{Convergence to non-interior Nash equilibria}

\boundaryconv*
\begin{proof}
    Let $\betaBRs$ be any choice of smoothed best-response map induced by a set of linearly steep and proper regularizers $\regs$ and $\beta > 0$. Its Jacobian under the coordinate representation specified by \Cref{def:simplex-calculus} is:
    \[\nabla \betaBRs(\strats) = \frac{1}{\beta} \Hs(\strats)^{+} \Jacobian(\strats),\]
    where $\Hs(\strats)$ is the block-diagonal matrix with the $n$th block being $\nabla^2_n h_n(x_n)^+$ and $\Jacobian(\strats)$ is the game Jacobian, as computed by \Cref{lem:gradient-sbr}. Define $\mathbf{G}(\strats) := \Hs(\strats)^+ \Jacobian(\strats)$ and let $\Pi : \domains \to \domains(\NE)$ be the orthogonal projection to the support of $\NE$. Let $U \subset \domains$ be any open ball containing $\NE$ such that $\|\Hs(\strats)^{+} \Jacobian(\strats)\| \leq L$ is bounded by a constant $L > 0$. Such a ball exists because the regularizers are proper, so that $\nabla \betaBRs$ is smooth. Moreover, the reduced game around $\NE$ is locally uniformly stable, so this ball can be chosen such that:
    \[\mathrm{spec}\big(\mathbf{G}(\Pi \SE)\big) \subset i\RR.\]
    
    To prove that the $(\betaBRs, \eta)$-averaging dynamics is a local contraction at $\SE$ whenever $\eta$ is sufficiently small, we show that the Jacobian of these dynamics have operator norm bounded by 1. The Jacobian of the $(\betaBRs, \eta)$-averaging dynamics can be decomposed as follows:
    \begin{align*} 
        (1 - \eta) \mathbf{I} + \eta \nabla \betaBRs(\SE) &= (1 - \eta)\mathbf{I} + \frac{\eta}{\beta}\, \mathbf{G}(\SE)
        \\&= (1 - \eta)\mathbf{I} + \frac{\eta}{\beta}\, \mathbf{G}(\Pi \SE) + \frac{\eta}{\beta} \,\bigg(\mathbf{G}(\SE) - \mathbf{G}(\Pi \SE)\bigg)
        \\&= (1 - \eta)\mathbf{I} + \frac{\eta}{\beta}\, \mathbf{G}(\Pi \SE) + \eta\,\bigg(\frac{\nabla \mathbf{G}(\bar{\strats}^\beta)(\SE - \Pi \SE)}{\beta}\bigg),
    \end{align*}
    where in the last step applies the mean-value theorem to find some $\bar{\strats}^\beta$ that is a convex combination of $\SE$ and $\Pi \SE$, once again making use of the smoothness of the pseudoinverse of the Hessian of the regularizers.

    When $\beta$ is sufficiently small, then $\Pi \SE$ is contained in $U$, so that the eigenvalue of $\mathbf{G}(\Pi \SE)$ is $i\lambda$ for some real-valued scalar $\lambda$ with $|\lambda| < L$. By \Cref{lem:steep-convergence}, the norm of the last term can be made arbitrarily small:
    \begin{align*}
        \left\|\frac{\nabla \mathbf{G}(\bar{\strats}^\beta)(\SE - \Pi \SE)}{\beta}\right\|_2 &\leq \frac{C \|\SE - \Pi \SE\|}{\beta} \leq o(\beta),
    \end{align*}
    where we may uniformly bound the norm of $\nabla \mathbf{G}$ over $\domains$ since it is a smooth map on a compact set. Let us choose $\beta$ small enough so that this term is no more than $1/2$. Thus, the operator norm of the Jacobian of the dynamics is bounded by:
    \begin{align*}
        \big\|(1 - \eta) \mathbf{I} + \eta \nabla \betaBRs(\SE)\big\|_2 &\leq \left|1 - \frac{\eta}{2} + i \eta \frac{L}{\beta}\right| \leq \exp\left(- \frac{\eta}{2}\right),
    \end{align*}
    where the last inequality holds when $\eta \leq \beta^2 / (1 + 4L^2)$; see the proof of \Cref{thm:main-convergence} for this computation.
\end{proof}

\end{document}